\def\RR{\mathbb{R}}
\def\PP{\mathbb{P}}
\def\EE{\mathbb{E}}
\newcommand*{\wh}{\widehat}
\newcommand*{\wt}{\widetilde}
\theoremstyle{plain}
\newtheorem{theorem}{Theorem}
\newtheorem{proposition}[theorem]{Proposition}
\newtheorem{corollary}[theorem]{Corollary}
\newtheorem{lemma}[theorem]{Lemma}
\theoremstyle{remark}
\newtheorem*{remarks}{Remarks}
    \definecolor{Red}{rgb}{1.00, 0.00, 0.00}
    \definecolor{DRed}{rgb}{0.5, 0.00, 0.00}
    \definecolor{Blue}{rgb}{0.00, 0.00, 1.00}
    \definecolor{Green}{rgb}{0.3, 0.7, 0.3}
    \definecolor{PaleGrey}{rgb}{.6, .6, .6}
\begin{document}

\title{A new look at short-term implied volatility in asset
    price models with jumps}

\author{Aleksandar Mijatovi\'{c}}
\address{Department of Mathematics, Imperial College London, UK}
\email{a.mijatovic@imperial.ac.uk}

\author{Peter Tankov} 
\address{LPMA, Universit\'e Paris-Diderot}
\email{peter.tankov@polytechnique.org}

\begin{abstract}
We analyse the behaviour of the implied volatility smile for options
close to expiry in the exponential L\'evy class of asset price models 
with jumps.  We introduce a new renormalisation of the strike variable 
with the property that the implied volatility 
converges to a non-constant limiting shape, 
which is a function of 
both the diffusion component of the process and the jump activity (Blumenthal-Getoor) 
index of the jump  component. Our limiting implied volatility formula relates
the jump activity of the underlying asset price process 
to the short end of the implied volatility surface 
and 
sheds new light on the 
difference between finite and infinite variation jumps from the 
viewpoint of option prices:  in the latter, 
the wings of the limiting smile are determined by the jump activity indices of
the positive and negative jumps, whereas in the former, 
the wings have a constant model-independent slope. This 
result gives a theoretical justification for the preference 
of the infinite variation L\'evy models 
over the finite variation ones in 
the calibration based on short-maturity option prices. 
\end{abstract}


\keywords{exponential L\'evy models, Blumenthal-Getoor
index, short-dated options, implied
volatility}

\maketitle

\section{Introduction}

In financial markets, the price of a vanilla call or put option 
on a risky asset 
with strike 
$e^k$
and maturity 
$t$
is often quoted in terms of the \emph{implied volatility}
$\wh\sigma(t,k)$
(see~\eqref{eq:DefOfImpiedVol}
in Section~\ref{sec:Asym_IVOL}
for the definition
and~\cite{gatheral} for more information on implied
volatility).
Similarly, given a risk-neutral pricing model,
one can define a function 
$(t,k)\mapsto\wh\sigma(t,k)$
via the prices of the 
vanilla options under that model. 
The implied volatility is a central object in 
option markets and it is
therefore not surprising that understanding the properties 
and computing the function
$(t,k)\mapsto\wh\sigma(t,k)$
for widely used  pricing models 
has been of considerable interest in the mathematical 
finance literature.
Typically, for a given modelling framework, 
the implied volatility
$\wh\sigma(t,k)$
is not available in closed form.
Hence the study of the asymptotic behaviour
in a variety of asymptotic regimes (e.g. fixed
$t$
and
$k\to\pm\infty$~\cite{lee.04,benaim.friz.06,gulisashvili.10};
$t\to\infty$
with 
$k$ constant~\cite{tehranchi.09} or proportional~\cite{KJM} to 
$t$;
$t\to0$
and 
$k$
constant~\cite{Roper_Rutkowski_09,tankov.09,Ford_Figueroa-Lopez}
etc.)
has attracted a lot of attention in the recent years.



In this paper 
we 
assume that 
the returns of the 
risky asset 
$S = e^{X}$
are modelled by 
a L\'evy process $X$ 
and 
study the relationship
between the jump activity of $X$ 
and the implied volatility 
at short maturities
in the model $S$.
Most existing approaches analyse either the at-the-money case, when
the implied volatility is determined exclusively by the diffusion
component and converges to zero in the pure jump models
(see~\cite[Prop.~5]{tankov.09}, \cite{muhle2011small,gong.al.11}), or the 
fixed-strike out-of-the-money case, when the implied volatility 
for short maturities 
explodes in the presence of jumps
(\cite{Roper},~\cite{Ford_Figueroa-Lopez},~\cite{tankov.09}). 
However, in the option markets, 
(a) although the implied volatility for liquid strikes grows with decreasing
$t$,
it remains within a range
of reasonable values and appears not to explode, 
and (b) 
the liquid strikes become concentrated around the money as 
the maturity gets shorter. For instance, in the FX option markets, 
which are among the most liquid derivatives markets in the world,
options with fixed values of the Black-Scholes delta are quoted for 
each maturity (see~\cite{AL12} for the details on the conventions in
FX option markets and a natural parameterisation of the smile using
the Black-Scholes delta).
\begin{figure}
\label{fig:Strikes}
\begin{center}
\includegraphics[width = \textwidth]{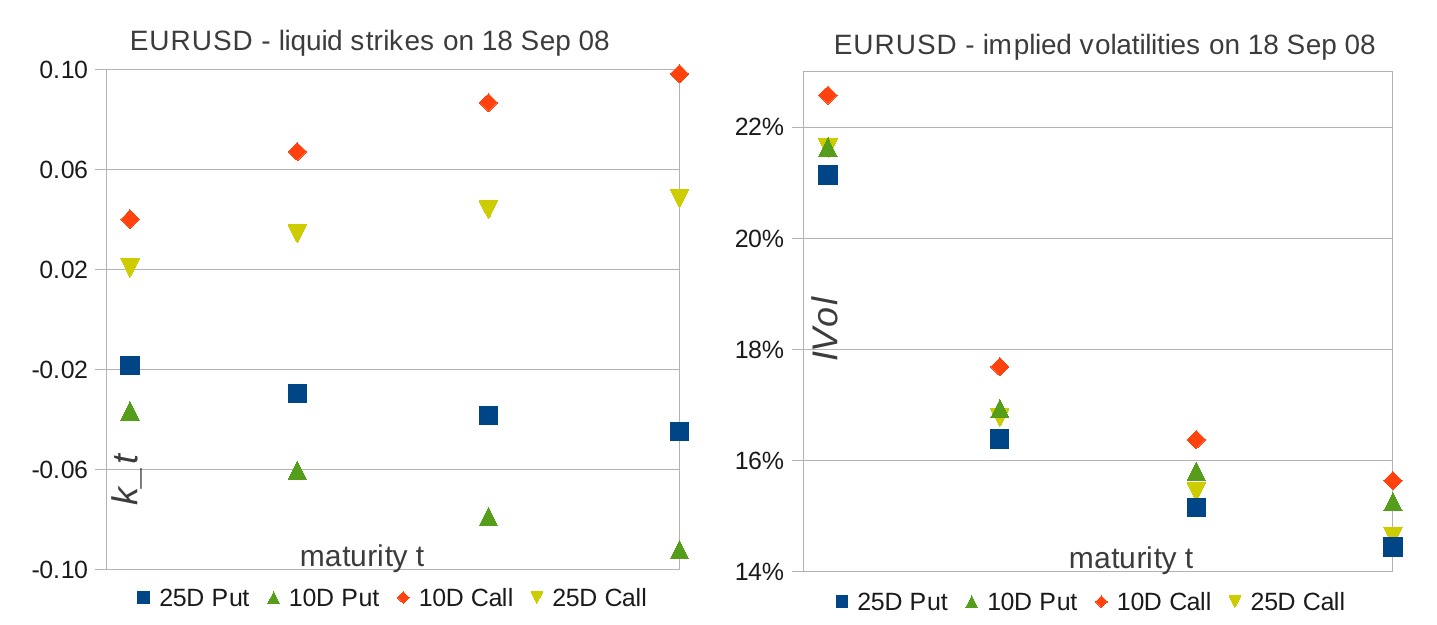}
\caption{\footnotesize The liquid 10-$\Delta$  and 25-$\Delta$ strikes 
(left panel) and the corresponding implied volatilities (right panel)
for market defined maturities $t\in\{\text{$1$ day, $1$ week, $2$ weeks, $1$ month}\}$
in USDJPY suggest the following: as maturity $t$ becomes
small, the relevant strikes 
$k_t$ should approach the at-the-money strike and the implied volatilities
$\wh \sigma(t,k_t)$
should remain bounded.}  
\end{center}
\end{figure}
The market data in Figure~\ref{fig:Strikes} 
therefore suggests that, in order 
to understand the behaviour of the volatility surface at 
short maturities,
one should look for a \textit{moving log-strike}
$k_t\neq0$,
for 
$t>0$,
such that (i)~the corresponding implied 
volatility has a non-trivial limit
$\lim_{t\downarrow0}\wh \sigma(t,k_t)$
and 
(ii)~the strike 
$k_t$
converges to the at-the-money strike
as maturity 
$t$
tends to zero
(i.e. $\lim_{t\downarrow0}k_t=0$).


This paper defines a new  
universal and model-free parameterisation of the log-strike
given by
$$
k_t=\theta \sqrt{t \log(1/t)}\quad \text{where}\quad
\theta\in\RR\!\setminus\!\{0\}.
$$
For fixed $\theta$, the corresponding strike value
tends to the at-the-money strike as
$t\downarrow0$
but is out-of-the-money
for each short maturity
$t>0$. We prove that under suitable assumptions the limiting implied volatility 
$\sigma_0(\theta)=\lim_{t\downarrow0} \wh \sigma(t,k_t)$
takes the following
form as a function of 
$\theta$:
\begin{equation}
\label{eq:The_Formula}
\sigma_0(\theta)
= \max\left\{\frac{-\theta}{\sqrt{1-(\alpha_--1)^+}},
\sigma,\frac{\theta}{\sqrt{1-(\alpha_+-1)^+}}\right\}
\qquad
\text{for any}\qquad 
\theta\in\RR\!\setminus\!\{0\}.
\end{equation}
In this formula
$\sigma$
denotes the volatility of the Gaussian component
of the underlying L\'evy process 
$X$
and 
$\alpha_+$
(resp. 
$\alpha_-$)
denotes the jump activity (Blumenthal-Getoor) index of the positive
(resp. negative) jumps of $X$. More precisely, if the jump
measure of $X$ is denoted by $\nu$, $\alpha_+$ and $\alpha_-$ are given
by
$$
\alpha_+ = \inf\{p\geq0:\int_{(0,1)}|x|^p\nu(dx)<\infty\}\quad
\text{and}\quad \alpha_- = \inf\{p\geq0:\int_{(-1,0)}|x|^p\nu(dx)<\infty\}.
$$

Unlike in the case of fixed strike, where 
short maturity smile explodes in the presence of jumps, 
our parameterisation of the strike as a function of time 
yields a non-constant formula for the limiting implied volatility, 
which depends on the balance between the size of the Gaussian volatility parameter
and the activity of small jumps.
It allows us to make the following observations about the relationship between
the short-dated option prices and the characteristics of the underlying
model:
\begin{enumerate}[(i)]
\item the formula for
$\sigma_0(\theta)$
depends on the jump measure of the log-spot process
$X$
only if the jumps are of infinite variation;
put differently, if the jumps of 
$X$
are of finite variation, then the absolute value of the slope of the limiting smile 
for large
$|\theta|$
is equal to one 
and in particular
$\sigma_0(\theta)$
does not depend on the structure of jumps;
\label{rem:Theoretical}
\item the limiting smile
$\sigma_0(\theta)$
is V-shaped in the absence of the diffusion component
(i.e. when 
$\sigma=0$)
and is 
U-shaped otherwise;
\end{enumerate}

Remark~\eqref{rem:Theoretical} 
provides a theoretical basis
for distinguishing between the models with jumps of finite and infinite variation 
in terms of the observed prices of vanilla options with short maturity. 
It is well-known that, 
for any short maturity $t$, 
the market implied smile 
$k\mapsto\wh\sigma(t,k)$
exhibits pronounced skewness
and/or curvature,
due, in particular, to the risk of large moves over short time horizons perceived by the investors.
Hence, jumps are typically introduced into the risk-neutral pricing models with the 
aim to capture this risk and modulate the at-the-money skew 
of the implied volatility 
$\wh \sigma(t,k)$
at small $t$ 
(see e.g.~\cite[Eq~(5.10)]{gatheral}).
However, since this task can be accomplished by jumps of either finite
or infinite variation, this requirement tells us little about the options
implied jump activity of the underlying risk-neutral model.
On the other hand,
the formula for 
$\sigma_0(\theta)$
implies that, if we need to control the tails (in
the parameter
$\theta$) 
of the implied volatility for short maturities, we must
use jumps of infinite variation. This finding complements
the analysis in~\cite{carrwu}
of the path-wise structure of the risk-neutral process implied by the 
option prices on the S\&P 500 index. 


In recent years, there has been a lot of interest in the literature on the statistics
of stochastic process in the question of the estimation of the Blumenthal-Getoor
index of models with jumps based on high-frequency data.
For example,
it is shown in~\cite{aj4}
that the jump activity (measured by the Blumenthal-Getoor index) estimated on
high-frequency stock returns for two large US corporates is well beyond one, 
implying that the underlying model for stock returns should have jumps of 
infinite variation.
Likewise, the formula in~\eqref{eq:The_Formula} suggests that jumps of 
infinite variation are needed in order to capture the correct tails 
(in $\theta$)
of the quoted short-dated option prices. 

The formula in~\eqref{eq:The_Formula}
follows from Corollary~\ref{volimp.cor},
which gives the expansion of the implied volatility 
$\wh \sigma(t,k_t)$,
where
$k_t=\theta \sqrt{t \log(1/t)}$, 
up to order
$o\left(1/\log(1/t)\right)$.
This expansion is consequence of 
(A) Theorem~\ref{ivexpand},
which itself gives an expansion of the implied volatility 
for a general log-strike
$k_t$
that tends
to zero as
$t\downarrow0$,
and (B) 
Theorem~\ref{infvar}
and Proposition~\ref{finvar},
which describe the asymptotic behaviour of the 
option prices under L\'evy processes with infinite and 
finite jump variations respectively.
Theorem~\ref{ivexpand}
relates the asymptotic behaviour of the vanilla option prices
under a general semimartingale model to the asymptotic behaviour 
of the implied volatility as the log-strike
$k_t$
tends to zero
(it should be noted that the asymptotic regime 
$(t,k_t)$
in Theorem~\ref{ivexpand}
is not covered by the analysis in~\cite{gao.lee.11},
see Remark~\eqref{rem:Not_in_Gao_Lee} after  
Theorem~\ref{ivexpand} for more details).
The asymptotic formula 
in Corollary~\ref{volimp.cor}
then follows by combining 
Theorem~\ref{ivexpand}
with the asymptotic behaviour of the vanilla option prices
established in Theorem~\ref{infvar}
(for the case of jumps of infinite variation)
and Proposition~\ref{finvar}
(for jumps of finite variation).

In a certain sense, 
Theorem~\ref{infvar}
and Proposition~\ref{finvar}
represent the main contributions 
of this paper. The asymptotic formulae for the 
call and put options, 
struck at
$e^{k_t}$
and 
$e^{-k_t}$
respectively,
have the same structure in both results: 
the leading order term is a sum of two contributions,
one coming from the diffusion component of the process
and the other from the jump measure. Which of the two
summands dominates in the limit depends on 
the level of the
parameter
$\theta$.
This structure of the asymptotic formulae is also reflected in the expression
for
$\sigma_0(\theta)$,
as it is clear from~\eqref{eq:The_Formula}
that 
$\sigma_0(\theta)\equiv\sigma$
if 
$\theta$
is between
$-\sigma\sqrt{1-(\alpha_--1)^+}$
and 
$\sigma\sqrt{1-(\alpha_+-1)^+}$, and 
$\sigma_0(\theta)$
only depends on the jump measure otherwise.
However, the proofs of
Theorem~\ref{infvar}
and Proposition~\ref{finvar} 
differ greatly: the finite variation case follows  
from the 
It\^o-Tanaka formula, which can in this case be applied directly
to the hockey-stick payoff function, while the case of jumps with 
infinite variation requires a detailed analysis of the asymptotic behaviour
of the option prices. 

The remainder of the paper is organised as follows. Section~\ref{sec:Option_Prices_Exp_LEvy}
defines the setting and states 
Theorem~\ref{infvar}
and Proposition~\ref{finvar}.
In Section~\ref{sec:Asym_IVOL},
we state and prove the asymptotic formulae for the implied
volatility and establish the limit in~\eqref{eq:The_Formula}.
Section~\ref{sec:Numerical} presents numerical results that demonstrate  
the convergence of option prices and implied volatilities given 
in the previous two sections, in the context of a CGMY model
and a CGMY model with an additional diffusion component. 
Section~\ref{sec:Proofs}
concludes the paper by proving
Theorem~\ref{infvar}
and Proposition~\ref{finvar}.
The appendix contains a short technical lemma, which is applied in
Section~\ref{sec:Proofs}.

\section{Option price asymptotics close to the money}
\label{sec:Option_Prices_Exp_LEvy}

In this paper we study the behaviour of option prices 
close to maturity in an exponential L\'evy model
$S=e^X$,
where
$X$ 
is a L\'evy process 
with the characteristic
triplet 
$(\sigma^2,\nu,\gamma)$.
Throughout the paper
we assume the following:
\begin{enumerate}[$\bullet$]
\item 
$S$
is a true martingale
(i.e. the interest rates and dividend yields are equal);
\item $S$ is normalised to start at 
$S_0=1$
(i.e. as usual
the L\'evy process 
$X$
starts at
$X_0=0$);
\item the tails of the L\'evy measure 
$\nu$
admit exponential moments:
\begin{eqnarray}
\int_{|z|>1} e^{|z|(1+\delta)}
\nu(dz)<\infty\qquad\text{for some $\delta>0$.}
\label{eq:Assum_tail}
\end{eqnarray}
\end{enumerate}
In particular, assumption~\eqref{eq:Assum_tail}
guarantees the finiteness 
of vanilla option prices for any maturity
$t>0$.
Section~\ref{subsec:Infinite_V}
describes the asymptotic behaviour of option prices for short maturities 
in the case the process
$X$
has jumps of infinite variation.
Section~\ref{subsec:Finite_V}
deals with the case where the pure-jump part of 
$X$
has finite variation.

\subsection{L\'evy processes with jumps of infinite variation}
\label{subsec:Infinite_V}
Theorem~\ref{infvar}
describes the asymptotic behaviour of option prices 
in the case the tails of the L\'evy measure of
$X$
around zero have asymptotic power-like behaviour. 
This assumption does not exclude any exponential L\'evy 
models that appear in the literature but yields sufficient 
analytical tractability to characterise a non-trivial 
limit as maturity tends to zero for the option prices 
around the at-the-money.
Before stating the theorem, we recall standard notation 
used throughout the paper: functions
$f(t)$
and
$g(t)$,
where
$g(t)>0$
for all small
$t>0$,
satisfy
\begin{subequations}
\begin{eqnarray}
\label{def:sim}
f(t)\sim g(t) 
\quad
\text{as $t\downarrow 0$ if}
& &
\lim_{t\downarrow 0}\frac{f(t)}{g(t)}=1,\\
\label{def:o}
f(t)=o(g(t))
\quad
\text{as $t\downarrow 0$ if}
& &
\lim_{t\downarrow 0}\frac{f(t)}{g(t)}=0,\\
\label{def:O}
f(t)=O(g(t))
\quad
\text{as $t\downarrow 0$ if}
& &
\text{$\frac{f(t)}{g(t)}$ is bounded
for all small $t>0$.}
\end{eqnarray}
\end{subequations}
Furthermore we denote
$x^+:=\max\{x,0\}$
for any 
$x\in\RR$.

\begin{theorem}\label{infvar}
Let $X$ be a L\'evy process 
as described at the beginning of the section 
and assume that the following holds
\begin{align}
\lim_{x\downarrow 0} x^{\alpha_+} \nu((x,\infty)) = c_+,\qquad
\lim_{x\downarrow 0} x^{\alpha_-} \nu((-\infty,-x)) = c_-
\label{halpha}
\end{align}
for $\alpha_+,\alpha_- \in (1,2)$ and $c_+,c_-\in[0,\infty)$. Let
$k_t$ be a deterministic function satisfying 
$$
k_t>0\quad\forall t>0,\qquad \lim_{t\downarrow 0} k_t = 0
$$
and 
\begin{align*}
&\text{if $\sigma^2=0$}, \quad \lim_{t\downarrow 0} \frac{t^{1/\alpha}}{k_t} = 0\quad 
\text{for some $\alpha\in(\max(\alpha_-,\alpha_+),2)$,}\\
&\text{if $\sigma^2>0$}, \quad \lim_{t\downarrow 0}
\frac{\sqrt{t}}{k_t} = 0. 
\end{align*}
Then, if $c_+>0$, we have
\begin{align}
\mathbb E[(e^{X_t} - e^{k_t})^+] \sim 
\mathbb E[(e^{\sigma W_t- \frac{\sigma^2 t}{2}} -e^{k_t})^+] +  
\frac{t k_t^{1-\alpha_+} c_+}{\alpha_+-1} \quad \text{as $t\downarrow 0$}\label{infvar.eq}
\end{align}
and, if $c_->0$, it holds
\begin{align}
\mathbb E[(e^{-k_t}-e^{X_t})^+] \sim \mathbb E[(e^{-k_t} - e^{\sigma W_t- \frac{\sigma^2 t}{2}})^+] +  \frac{t k_t^{1-\alpha_-}
  c_-}{\alpha_--1} \quad \text{as $t\downarrow 0$}.\label{infvar.eq_Put}
\end{align}
\end{theorem}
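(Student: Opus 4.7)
The plan is to exploit the independence of the Gaussian and pure-jump components of the L\'evy process. Write $X_t = B_t + J_t$, with $B_t := \sigma W_t - \sigma^2 t/2$ the Gaussian martingale and $J$ an independent pure-jump L\'evy process normalised so that $e^{J_t}$ is a martingale; conditioning on $J$ gives
\begin{equation*}
\mathbb{E}[(e^{X_t} - e^{k_t})^+] = \mathbb{E}\bigl[e^{J_t}\,C(k_t - J_t, t)\bigr], \qquad C(m,t) := \mathbb{E}[(e^{B_t} - e^m)^+],
\end{equation*}
so \eqref{infvar.eq} reduces to identifying the leading order of $\mathbb{E}[e^{J_t}\,C(k_t - J_t, t)] - C(k_t, t)$. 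To analyse this difference, I would truncate the jumps of $J$ at a threshold $\epsilon_t$ satisfying $\epsilon_t \to 0$ and $\epsilon_t \ll k_t$ while $t\epsilon_t^{-\alpha} \to 0$ with $\alpha := \max(\alpha_+, \alpha_-)$ (for instance $\epsilon_t = k_t/\log(1/k_t)$), and decompose $J_t = J^{(s)}_t + J^{(b)}_t$ as the independent sum of two L\'evy martingales: $J^{(b)}$ is the compound Poisson process carrying jumps of size larger than $\epsilon_t$, with intensity $\lambda_t := \nu(\{|z|>\epsilon_t\}) = O(\epsilon_t^{-\alpha})$, and $J^{(s)}$ collects the small jumps and satisfies $\mathbb{E}[(J^{(s)}_t)^2] = O(t\epsilon_t^{2-\alpha})$, which is $o(k_t^2)$ under the hypotheses.

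Next, partition on the number $N_t$ of large jumps in $[0,t]$. The event $\{N_t\ge 2\}$ has probability $O((t\lambda_t)^2)$ and contributes $o(tk_t^{1-\alpha_+})$. On $\{N_t = 0\}$, $J_t$ equals $J^{(s)}_t$ plus a negligible drift, and Lipschitz continuity of $C(\cdot, t)$ combined with the martingale property $\mathbb{E}[J^{(s)}_t] = 0$ gives
\begin{equation*}
\mathbb{E}[e^{J_t}\,C(k_t - J_t, t);\, N_t = 0] = C(k_t, t)\,\mathbb{P}(N_t = 0) + o(tk_t^{1-\alpha_+}).
\end{equation*}
Combining this with $\mathbb{P}(N_t = 0) = 1 - t\lambda_t + O((t\lambda_t)^2)$ and isolating the one-big-jump contribution reduces the problem to
\begin{equation*}
\mathbb{E}[e^{J_t}\,C(k_t - J_t, t)] - C(k_t, t) = t\int_{|z|>\epsilon_t}\bigl[e^z C(k_t - z, t) - C(k_t, t)\bigr]\,\nu(dz) + o(tk_t^{1-\alpha_+}).
\end{equation*}

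The dominant contribution to this integral comes from $\{z > k_t\}$: there the jump crosses the strike and $C(k_t-z,t)$ is asymptotically equal to its intrinsic value $1 - e^{k_t-z}$ (the Black-Scholes time value being negligible under the hypothesis on $k_t$), so the integrand is asymptotic to $e^z - e^{k_t} \sim z - k_t$ for small $k_t, z$. Combined with the tail asymptotic $\nu((x,\infty)) \sim c_+ x^{-\alpha_+}$ and integration by parts this produces
\begin{equation*}
t\int_{k_t}^{\infty}(z-k_t)\,\alpha_+ c_+ z^{-\alpha_+-1}\,dz = \frac{tk_t^{1-\alpha_+} c_+}{\alpha_+-1},
\end{equation*}
the announced leading order. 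The remaining pieces -- the band $\epsilon_t < z \le k_t$ (the jump fails to reach the strike, so the integrand is small), the set $\{z < -\epsilon_t\}$ (negative jumps push the underlying further out of the money, so $C(k_t - z, t) - C(k_t, t)$ is $O(t)$-small), and the residual effect of the small-jump shift on $\{N_t = 1\}$ -- all contribute $o(tk_t^{1-\alpha_+})$. The put formula \eqref{infvar.eq_Put} follows by a symmetric argument, with $(\alpha_+, c_+)$ replaced by $(\alpha_-, c_-)$.

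The main obstacle I anticipate is the uniform control of the interaction between $J^{(s)}$ and the Black-Scholes call on the one-big-jump event $\{N_t = 1\}$: one must show that replacing $C(k_t - z - J^{(s)}_t, t)$ by $C(k_t - z, t)$ introduces only lower-order error after integration against $\nu(dz)$ over $\{|z|>\epsilon_t\}$. Since the Lipschitz constant of $C$ is not small when $k_t - z$ is close to zero, the argument has to exploit the decay of $\partial_m C(m, t)$ when $m$ is of order $\sqrt t$ or more away from zero, and a careful balancing of the three scales $\sqrt t$, $\epsilon_t$ and $k_t$ is required to keep all error terms strictly of lower order than $tk_t^{1-\alpha_+}$.
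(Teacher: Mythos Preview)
Your architecture --- truncate the jumps at a scale $\epsilon_t \ll k_t$, partition on the number of big jumps, extract the main term from the one-jump event --- is essentially the same as the paper's. Two cosmetic differences: the paper first linearises via the elementary estimate $\mathbb E[(e^{X_t}-e^{k_t})^+] = e^{k_t}\mathbb E[(X_t-k_t)^+] + O(t)$ (so it works with $(X_t-k_t)^+$ rather than the Black--Scholes call), and it handles the negative jumps separately in a second step via a sandwiching argument with a different truncation scale $t^{1/\beta}$, $\beta\in(\max(\alpha_+,\alpha_-),\alpha)$, rather than lumping them into $J^{(s)}$.

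There is, however, a real gap in your treatment of the event $\{N_t=0\}$. The claim that ``Lipschitz continuity of $C(\cdot,t)$ combined with $\mathbb E[J^{(s)}_t]=0$'' yields
\[
\mathbb E\bigl[e^{J^{(s)}_t}C(k_t-J^{(s)}_t,t)\bigr] = C(k_t,t) + o\bigl(tk_t^{1-\alpha_+}\bigr)
\]
is not justified. Lipschitz plus zero mean gives at best a second-order remainder governed by $\mathbb E[(J^{(s)}_t)^2]\,\sup_m C''(m,t)$, and $C''(\cdot,t)$ blows up like $t^{-1/2}$ near the money; in the pure-jump case $\sigma=0$ you are literally asking for $\mathbb E[(J^{(s)}_t-k_t)^+]=o(tk_t^{1-\alpha_+})$, which is a tail statement about the small-jump process crossing the barrier $k_t$. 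The paper handles exactly this point with a large-deviation bound of R\"uschendorf--Woerner type, obtaining $\mathbb P[J^{(s)}_t>k_t]\leq C(tk_t^{-\alpha_+})^{1/(4\varepsilon)}$ for truncation at level $\varepsilon k_t$, and then splits the second-order Taylor remainder according to $\{J^{(s)}_t\leq k_t/2\}$ versus its complement. This is the genuinely delicate step, and it is missing from your sketch.

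Conversely, the obstacle you anticipate --- removing $J^{(s)}_t$ on the one-jump event --- is not where the difficulty lies. On $\{N_t=1\}$ the prefactor $\lambda_t t$ already carries a power of $t$, so the crude Lipschitz bound
\[
\lambda_t t\,\bigl|\mathbb E[C(k_t-\xi-J^{(s)}_t,t)] - \mathbb E[C(k_t-\xi,t)]\bigr| \leq \lambda_t t\,\mathbb E|J^{(s)}_t| = O\bigl(t^{3/2}k_t^{1-3\alpha_+/2}\bigr) = o(tk_t^{1-\alpha_+})
\]
suffices, with no need to exploit the decay of $\partial_m C$. The hard work is all on $\{N_t=0\}$.
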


\begin{remarks}
\begin{enumerate}[(i)]
\item Theorem~\ref{infvar}
implies that
the price of a call (resp. put) option struck at 
$e^{k_t}$
(resp.
$e^{-k_t}$)
tends to zero at a rate strictly slower than 
$t$
if the paths of the pure jump part of $X$
have infinite variation. 
In particular, combining the notation in~\eqref{def:sim} and~\eqref{def:o},
we get that the following equalities hold as $t\downarrow 0$:
\begin{eqnarray*}
\mathbb E[(e^{X_t} - e^{k_t})^+]  & = &  
\mathbb E[(e^{\sigma W_t- \frac{\sigma^2 t}{2}} -e^{k_t})^+] +  
\frac{t k_t^{1-\alpha_+} c_+}{\alpha_+-1}+ o\left(t k_t^{1-\alpha_+}\right),\\
\mathbb E[(e^{-k_t}-e^{X_t})^+] & = &
\mathbb E[(e^{-k_t} - e^{\sigma W_t- \frac{\sigma^2 t}{2}})^+] +  
\frac{t k_t^{1-\alpha_-}c_-}{\alpha_--1} +
o\left(t k_t^{1-\alpha_-}\right).
\end{eqnarray*}
\label{rem:slower_than_t}
\item The proof of Theorem~\ref{infvar}
is given in Section~\ref{sec:Proof_Thm_infvar}.
\end{enumerate}
\end{remarks}

\subsubsection{Blumenthal-Getoor index and the short-dated option prices.}
\label{subsubsec:BG_index}
Recall that for any L\'evy process
$Y$
with a non-trivial L\'evy measure
$\nu_Y$,
the \textit{Blumenthal-Getoor index}, introduced in~\cite{bg61},
is defined as
\begin{equation}
\label{eq:BG_Def}
\textrm{BG}(Y):=\inf\left\{r\geq0\>:\>\int_{(-1,1)\setminus\{0\}}|x|^r\nu_Y(dx)<\infty\right\}.
\end{equation}
The Blumenthal-Getoor index
is a measure of the jump activity
of the L\'evy process
$Y$, 
since the following holds: 
$r> \textrm{BG}(Y)$
if and only if
$\sum_{s\leq t} |\Delta Y_s|^r<\infty$
almost surely, where 
$\Delta Y_s:=Y_s-Y_{s-}$
denotes the size of the jump of 
$Y$
at time 
$s$.
Furthermore, it is clear from~\eqref{eq:BG_Def}
that 
$\textrm{BG}(Y)$
lies in the interval 
$[0,2]$.

In recent years there has been renewed interest in the
Blumenthal-Getoor index from the point of view of estimation
of the jump structure of stochastic processes based
on high-frequency financial data. For example, it was estimated 
in~\cite{aj4}
that the value of 
$\textrm{BG}(Y)$
is around 
$1.7$
(i.e. the stock price process has jumps of infinite variation)
based on high-frequency transactions
(taken at 
$5$
and
$15$
time intervals) for 
\textit{Intel}
and
\textit{Microsoft}
stocks throughout 2006.

Let
$X^+$
and 
$X^-$
be the pure-jump parts of 
the L\'evy process
$X$
from Theorem~\ref{infvar}.
In other words
$X^+$
(resp. 
$X^-$)
is a L\'evy process
with the characteristic triplet
$(0,\nu^+,0)$
(resp.
$(0,\nu^-,0)$),
where
$\nu^+(dx):=1_{\{x>0\}}\nu(dx)$
(resp.
$\nu^-(dx):=1_{\{x<0\}}\nu(dx)$).
Then assumption~\eqref{halpha}
implies
$$
\textrm{BG}(X^+)=\alpha_+
\qquad\text{and}\qquad
\textrm{BG}(X^-)=\alpha_-,
$$
and relations~\eqref{infvar.eq} and~\eqref{infvar.eq_Put}
of Theorem~\ref{infvar}
describe how the Blumenthal-Getoor indices of 
the positive and negative jumps of 
$X$
influence the asymptotic behaviour of 
option prices at short maturities. 
The result clearly depends on the 
asymptotic behaviour of the log-strike
$k_t$.
In Section~\ref{sec:Asym_IVOL}
we will prescribe a specific parametric form of
$k_t$
(see~\eqref{eq:def_kt})
and give explicit formulae for 
the asymptotic expansion and the limit 
of the implied volatility as maturity tends
to zero in terms of 
the Blumenthal-Getoor indices  of
$X^+$
and
$X^-$
(see Corollary~\ref{volimp.cor}
for details).

\subsection{L\'evy processes with jumps of finite variation}
\label{subsec:Finite_V}
In this section we study the option price asymptotics 
at short maturities
in the case the process 
$X$
has a (possibly trivial) Brownian component 
and a pure jump part of finite variation.

\begin{proposition}\label{finvar}
Let $X$ be a L\'evy process 
as described at the beginning of Section~\ref{sec:Option_Prices_Exp_LEvy}.
Assume further that the jump part of 
$X$
has finite variation, i.e.
\begin{align*}
\int_{\mathbb R\setminus\{0\}} |x|\nu(dx) <\infty.
\end{align*}
Let
$k_t$ be a deterministic function satisfying 
$$
k_t>0\quad\forall t>0,\qquad \lim_{t\downarrow 0} k_t = 0
$$
and 
\begin{align*}
&\text{if $\sigma^2=0$}, \quad \lim_{t\downarrow 0} \frac{t}{k_t} = 0,\\
&\text{if $\sigma^2>0$}, \quad \lim_{t\downarrow 0}
\frac{\sqrt{t}}{k_t} = 0. 
\end{align*}
Then, 
as $t\downarrow 0$,
it holds:
\begin{align}
\mathbb E[(e^{X_t} - e^{k_t})^+] = \mathbb E[(e^{\sigma W_t- \frac{\sigma^2 t}{2}}
 -e^{k_t})^+] + t \int_{(0,\infty)} (e^x-1)\nu(dx) +o(t)\label{finvar1.eq}
\end{align}
and 
\begin{align}
\mathbb E[(e^{-k_t}-e^{X_t})^+] = \mathbb E[(e^{-k_t} - e^{\sigma W_t- \frac{\sigma^2 t}{2}})^+] +  t \int_{(-\infty,0)} (1-e^x)\nu(dx) +o(t).\label{finvar2.eq}
\end{align}
\end{proposition}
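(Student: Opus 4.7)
The strategy is to apply the It\^o--Tanaka (Meyer--It\^o) formula directly to the convex hockey-stick payoff $g(y) = (y-K)^+$ with $K := e^{k_t}$, composed with the martingale $S_t := e^{X_t}$. Because the pure-jump part of $X$ is a L\'evy process of finite variation, no compensation of small jumps is needed and Tanaka's formula takes a particularly clean form. The put-side identity~\eqref{finvar2.eq} is obtained by an entirely symmetric argument applied to $h(y) = (e^{-k_t}-y)^+$, so I describe only the call side.

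After taking expectations in Tanaka's formula (the It\^o integral is a true martingale since its integrand is bounded by $1$ and $\EE[\int_0^t S_{s-}^2\,ds]<\infty$ by~\eqref{eq:Assum_tail}, and $(1-K)^+=0$), the L\'evy compensation formula yields
\begin{equation*}
\EE\!\bigl[(S_t-K)^+\bigr] \;=\; \tfrac12\EE\!\bigl[L^K_t(S)\bigr] \;+\; \int_0^t\EE\!\left[\int_\RR\Psi_t(S_{s-},x)\,\nu(dx)\right]ds,
\end{equation*}
where $\Psi_t(y,x) := (ye^x-K)^+ - (y-K)^+ - \mathbf{1}_{\{y>K\}}\,y(e^x-1)$. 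The same identity applied to the Brownian martingale $\tilde S_t := e^{\sigma W_t-\sigma^2 t/2}$ (where the jump sum is absent) reduces to $\EE[(\tilde S_t-K)^+] = \tfrac12\EE[L^K_t(\tilde S)]$.

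Next, I would show by dominated convergence that the jump compensator equals $t\!\int_{(0,\infty)}(e^x-1)\nu(dx) + o(t)$. For each fixed $x\neq 0$ and $s\in(0,t)$, $S_{s-}\to 1$ and $K\to 1$ as $t\downarrow 0$, so $\Psi_t(S_{s-},x)\to (e^x-1)^+$ pointwise; a $\nu$-integrable dominator is produced by combining the finite-variation bound $\int(|x|\wedge 1)\,\nu(dx)<\infty$ for small $|x|$ with the exponential-moment assumption~\eqref{eq:Assum_tail} for large $|x|$.

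The remaining task, and the principal technical obstacle, is to establish $\tfrac12\bigl(\EE[L^K_t(S)] - \EE[L^K_t(\tilde S)]\bigr) = o(t)$. In the pure-jump case $\sigma=0$ the continuous martingale parts of both $S$ and $\tilde S$ vanish, so both local times are identically zero and the claim is immediate (note moreover that $\EE[(\tilde S_t-K)^+] = 0$ since $\tilde S_t \equiv 1 < K$). When $\sigma>0$ I would invoke the occupation-time formula $\EE[L^K_t(Y)] = \sigma^2 K^2\!\int_0^t f_{Y_s}(K)\,ds$ and write $X_s = \tilde X_s + \delta_s$ with $\delta_s := -s\!\int(e^x-1)\nu(dx)+\sum_{u\leq s}\Delta X_u$ independent of $\tilde X_s$, reducing the comparison to the density difference $f_{X_s}(k_t) - f_{\tilde X_s}(k_t)$. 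A first-order Fourier expansion in $s$ combined with the Gaussian pointwise and derivative bounds on $f_{\tilde X_s}$ at level $k_t \gg \sqrt{t}$ then reduces the local-time difference to a multiple of $k_t\cdot\EE[(\tilde S_t-K)^+]$, which is $o(t)$ under the hypothesis $\sqrt{t}/k_t\to 0$ by the classical out-of-the-money Black--Scholes tail asymptotics.
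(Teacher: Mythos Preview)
Your overall strategy---Meyer--It\^o on the hockey-stick, then compare---differs from the paper's in the case $\sigma>0$. The paper does not compare local times of $S$ and $\tilde S$; instead it conditions on the Brownian motion first. Setting $f(t,x) := \EE\bigl[(e^{-k_t}-e^{x+k_t+\sigma W_t-\sigma^2 t/2})^+\bigr]$ and $Z_t := (b+\sigma^2/2)t + \sum_{s\leq t}\Delta X_s$, one has $\EE[(e^{-k_t}-e^{X_t})^+] = e^{-k_t}\,\EE[f(t,Z_t)]$. Because $Z$ has finite variation, It\^o applied to $x\mapsto f(t,x)$ along $Z$ produces only a drift integral $\int_0^t f'(t,Z_s)\,ds$ and a jump sum, with no local-time term at all. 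The drift integral is $o(t)$ by the explicit Black--Scholes formula for $f'$, the a.s.\ convergence $Z_s/s\to b+\sigma^2/2$, and $k_t/\sqrt{t}\to\infty$; the jump sum is handled by dominated convergence exactly as in your sketch. The call identity~\eqref{finvar1.eq} is then deduced from the put~\eqref{finvar2.eq} by passing to the share measure. Absorbing the Gaussian part into a smooth test function and applying It\^o only to the finite-variation remainder is the key simplification.

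Your route could probably be completed, but the local-time comparison is genuinely under-argued as written. The ``first-order expansion'' in $\delta_s := X_s-\tilde X_s$ tacitly treats $\delta_s$ as pathwise small, yet $\delta_s$ has jumps: under only $\int|x|\,\nu(dx)<\infty$ the probability that $|\delta_s|>k_t/2$ for some $s\le t$ can be as large as $O(t/k_t)$, and on that event $f_{\tilde X_s}(k_t-\delta_s)$ is no longer exponentially small in $k_t^2/t$ but of order $s^{-1/2}$. The resulting contribution to $\int_0^t\bigl(f_{X_s}(k_t)-f_{\tilde X_s}(k_t)\bigr)\,ds$ is of order $t^{3/2}/k_t$, which is indeed $o(t)$ precisely by the hypothesis $\sqrt{t}/k_t\to0$, but this large-jump estimate is a separate argument that your sketch does not supply and that the phrase ``first-order Fourier expansion in $s$'' does not cover. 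The paper's conditioning device sidesteps the issue entirely: once the Gaussian fluctuation lives inside the smooth function $f$, only the finite-variation process $Z$ remains, and its small-time behaviour is governed by the elementary a.s.\ limit $Z_t/t\to b+\sigma^2/2$ rather than by any density comparison.
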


\begin{remarks}
\begin{enumerate}[(i)]
\item Proposition~\ref{finvar}
implies that, in the absence of a Brownian component,
the call and put prices of options struck at 
$e^{k_t}$
and
$e^{-k_t}$,
respectively,
tend to zero at the rate equal to 
$t$
if $X$
has paths of 
finite variation
(cf. Remark~\eqref{rem:slower_than_t} after Theorem~\ref{infvar}).

\item The Blumenthal-Getoor indices of the positive and negative 
jump processes
$X^+$
and
$X^-$
of  
$X$,
defined in Section~\ref{subsubsec:BG_index},
are both smaller or equal to one
by the assumption in Proposition~\ref{finvar}.
Furthermore, unlike in the case of jumps of infinite variation, 
Proposition~\ref{finvar}
implies that the asymptotic behaviour of short-dated option 
prices (as maturity $t$ tends to zero)
does not depend up to order
$o(t)$
on the indices
$\textrm{BG}(X^+)$
and
$\textrm{BG}(X^-)$.
Hence, the same will hold for the short-dated implied
volatility (cf. Corollary~\ref{volimp.cor}).

\item It should be stressed that the proof of Proposition~\ref{finvar},
given in Section~\ref{sec:Proof_Prop_finvar},
is fundamentally different from that of 
Theorem~\ref{infvar},
as it relies on the path-wise version of the It\^o-Tanaka formula for the processes 
of finite variation, which cannot be applied in the context of
Theorem~\ref{infvar}.
\end{enumerate}
\end{remarks}

\section{Asymptotic behaviour of implied volatility}
\label{sec:Asym_IVOL}

The value 
$C^{\textrm{BS}}(t,k,\sigma)$
of the European call option with strike
$e^k$
(for any
$k\in\RR$)
and expiry 
$t$
under a Black-Scholes 
model
(with 
log-spot 
$X_t=\sigma W_t-t\sigma^2/2$
of constant volatility 
$\sigma>0$)
is given by 
the Black-Scholes formula
\begin{eqnarray}
\label{eq:BS_Formula}
C^{\textrm{BS}}(t,k,\sigma) & = & N(d_+)-e^kN(d_-),\qquad
\text{where}
\quad
d_{\pm}=-\frac{k}{\sigma\sqrt{t}}\pm
\frac{\sigma\sqrt{t}}{2}
\end{eqnarray}
and
$N(\cdot)$
is the standard normal cumulative distribution function.
The price of a put option with the same strike and maturity
is given by
$P^{\textrm{BS}}\left(t,k,\sigma\right)= e^k N(-d_+)-N(-d_-)$.
Let 
$S$
be a positive martingale, with
$S_0=1$,
that models a risky security
and denote by 
\begin{equation}
\label{eq:Call_Put_Gen_Mod}
C(t,k):= \EE\left[\left(S_t-e^k\right)^+\right]
\qquad \text{and}\qquad
P(t,k):= \EE\left[\left(e^k-S_t\right)^+\right]
\end{equation}
the prices of call and put options on
$S$
struck at
$e^k$
with maturity 
$t$,
respectively.
The \textit{implied volatility}
in the model 
$S$
for any log-strike 
$k\in\RR$
and
maturity 
$t>0$
is the unique positive number
$\wh \sigma(t,k)$
that satisfies the following equation in 
$\sigma$:
\begin{eqnarray}
\label{eq:DefOfImpiedVol}
C^{\textrm{BS}}\left(t,k,\sigma\right)=C(t,k).
\end{eqnarray}
Implied volatility is well-defined since 
the function
$\sigma\mapsto C^{\textrm{BS}}\left(t,k,\sigma\right)$
is strictly increasing
on the positive half-line
and the right-hand side of~\eqref{eq:DefOfImpiedVol} 
lies in the image of the Black-Scholes formula
by a simple no-arbitrage argument. 
Put-call parity, which holds since 
$S$
is a true martingale, implies the identity
$P^{\textrm{BS}}\left(t,k,\wh \sigma(t,k)\right)=P(t,k)$.

In order to study the limiting behaviour of the implied volatility
close to the at-the-money strike 
$1=e^0$ for short maturities, 
we define the following parameterisation of the log-strike
$k_t$:
\begin{eqnarray}
\label{eq:def_kt}
k_t &:=& \theta \left(t \log\frac{1}{t}\right)^{1/2},\qquad\text{where}\quad
\theta\in\RR\!\setminus\!\{0\}.
\end{eqnarray}
We can now define the implied volatility 
$\sigma_t:\RR\!\setminus\!\{0\}\to(0,\infty)$
as a function of 
$\theta$
in the asymptotic maturity-strike regime
$(t,k_t)$,
given 
by~\eqref{eq:def_kt},
for a short maturity
$t$:
\begin{eqnarray}
\label{eq:Simga_t_Def}
\sigma_t(\theta) & :=  & \wh \sigma\left(t,k_t\right).
\end{eqnarray}
The implied volatility 
$\sigma_t(\theta)$
is of interest in the context of processes
with jumps, because its limit 
$\sigma_0(\theta)$,
as
$t\downarrow0$,
exists and is finite for each 
$\theta$,
depends on both the jump and the diffusion components
of the process and can be computed explicitly in terms
of the parameters. In order to find the asymptotic 
behaviour of 
$\sigma_t(\theta)$,
we first state 
Theorem~\ref{ivexpand},
which relates the asymptotics of 
$\sigma_t(\theta)$
to the asymptotic behaviour of the
out-of-the-money option price 
\begin{equation}
\label{eq:Out_of_money_option}
I_t(\theta):=C(t,k_t)1_{\{\theta>0\}}+P(t,k_t)1_{\{\theta<0\}}
\end{equation}
under 
the model 
$S$
as
maturity 
$t$
tends to zero.

\begin{theorem}\label{ivexpand}
Let 
$S$ be a martingale model for a risky security with
$S_0=1$
and
$k_t$
a log-strike given in~\eqref{eq:def_kt}
for a fixed
$\theta\in\RR\!\setminus\!\{0\}$.
Let
$\wh C_t$ 
and
$\wh P_t$ 
be deterministic functions 
such that 
$C(t,k_t)\sim \wh C_t$ 
and
$P(t,k_t)\sim \wh P_t$ 
as $t\downarrow 0$, 
where
$C(t,k_t)$ 
and
$P(t,k_t)$ 
are given in~\eqref{eq:Call_Put_Gen_Mod},
and define
$\wh I_t(\theta):=\wh C_t1_{\{\theta>0\}}+\wh P_t1_{\{\theta<0\}}$.
Assume further that the 
out-of-the-money option price
$I_t(\theta)$,
given in~\eqref{eq:Out_of_money_option},
satisfies:
\begin{equation}
\label{eq:Assumption_IVOL}
\frac{1}{2}<
\liminf_{t\downarrow 0} \frac{\log I_t(\theta)}{\log t}\leq
\limsup_{t\downarrow 0} \frac{\log I_t(\theta)}{\log t}<\infty.
\end{equation}
Then the implied volatility 
$\sigma_t(\theta)$,
defined in~\eqref{eq:Simga_t_Def},
can be expressed by
\begin{align}
\sigma_t(\theta) &= \frac{|\theta|}{\sqrt{2L_t(\theta)-1}} + \frac{|\theta|
  \log\frac{\left(2L_t(\theta)-1\right)^{\frac{3}{2}}\sqrt{2\pi}}{|\theta|}}{\left(2L_t(\theta)-1\right)^{\frac{3}{2}}}\frac{1}{\log
\frac{1}{t}}
+ O\left(\frac{1}{\log^2 \frac{1}{t}}\right),\qquad \text{as}\quad
t\downarrow 0, \label{bigo}
\end{align}
and
\begin{align}
\sigma_t(\theta)&= \frac{|\theta|}{\sqrt{2\wh L_t(\theta)-1}} + \frac{|\theta|
  \log\frac{\left(2\wh L_t(\theta)-1\right)^{\frac{3}{2}}\sqrt{2\pi}}{|\theta|}}{\left(2\wh L_t(\theta)-1\right)^{\frac{3}{2}}}\frac{1}{\log
\frac{1}{t}}
+ o\left(\frac{1}{\log \frac{1}{t}}\right),\qquad \text{as}\quad
t\downarrow 0,
\label{littleo}
\end{align}
where $L_t(\theta):=J_t(I_t(\theta))$ and
$\wh L_t(\theta):=J_t(\wh I_t(\theta))$
are defined by the formula
$$
J_t(x):= \frac{\log x }{\log t} -\frac{\log\log\frac{1}{t}}{\log \frac{1}{t}} \quad \text{for any} \quad 
x,t>0.
$$ 
\end{theorem}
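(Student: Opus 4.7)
The plan is to invert the implicit defining equation $C^{BS}(t,k_t,\sigma_t) = I_t(\theta)$ asymptotically, taking $\theta>0$ without loss of generality (the case $\theta<0$ is entirely analogous, with $P^{BS}$ in place of $C^{BS}$). The argument has three stages: an a priori bound on $\sigma_t$, a sharp uniform expansion of the Black-Scholes call in a compact $\sigma$-range, and asymptotic inversion followed by replacement of $L_t$ by $\widehat L_t$.

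\emph{Stage 1: a priori bounds.} I show that $\sigma_t(\theta)$ stays in a compact subinterval $[a,b]\subset(0,\infty)$ for all $t$ small. If $\sigma_t\to 0$ subsequentially, then $d_\pm\to-\infty$ so fast that $\log C^{BS}/\log t\to\infty$, contradicting $\limsup\log I_t/\log t<\infty$. If $\sigma_t\to\infty$, then $d_\pm\to 0$ and $C^{BS}\sim 1-e^{k_t} = O(\sqrt{t\log(1/t)})$, giving $\log C^{BS}/\log t\to 1/2$, contradicting $\liminf\log I_t/\log t>1/2$. Hence $\sigma_t$ eventually lies in some $[a,b]$, which is essential because the forthcoming BS expansion is uniform only over compact ranges of $\sigma$.

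\emph{Stage 2: uniform BS expansion.} Using the identity $e^{k}\phi(d_-)=\phi(d_+)$ and the Mills-ratio expansion $N(-x) = \phi(x)\bigl(x^{-1}-x^{-3}+O(x^{-5})\bigr)$ as $x\to\infty$, I write
\[
C^{BS}(t,k,\sigma) \;=\; \phi(d_+)\Bigl[\frac{1}{|d_+|}-\frac{1}{|d_-|}\Bigr]+\text{subleading} \;=\; \phi(d_+)\cdot\frac{\sigma\sqrt{t}}{|d_+|\,|d_-|}\bigl(1+o(1)\bigr),
\]
using $|d_-|-|d_+|=\sigma\sqrt{t}$. Substituting $k=k_t=\theta\sqrt{t\log(1/t)}$ and expanding $|d_\pm|$ around $|\theta|\sqrt{\log(1/t)}/\sigma$ (the $O(\sqrt{t\log(1/t)})$ corrections in $d_+^2/2$ and $\log|d_+||d_-|$ are $o(\log^{-N}(1/t))$ for every $N$), dividing by $\log t$ yields, uniformly for $\sigma\in[a,b]$,
\[
\frac{\log C^{BS}(t,k_t,\sigma)}{\log t} \;=\; \frac{\theta^2}{2\sigma^2}+\frac{1}{2}+\frac{\log\log(1/t)}{\log(1/t)}+\frac{\log(\theta^2\sqrt{2\pi}/\sigma^3)}{\log(1/t)}+O\!\left(\frac{1}{\log^2(1/t)}\right).
\]
Setting $\sigma=\sigma_t$, using $C^{BS}(t,k_t,\sigma_t) = I_t(\theta)$, and subtracting $\log\log(1/t)/\log(1/t)$ produces the master equation
\[
2L_t(\theta)-1 \;=\; \frac{\theta^2}{\sigma_t^2} - \frac{2\log(\theta^2\sqrt{2\pi}/\sigma_t^3)}{\log(1/t)} + O\!\left(\frac{1}{\log^2(1/t)}\right).
\]

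\emph{Stage 3: inversion and replacement.} With the leading approximation $\sigma_t^{(0)}:=|\theta|/\sqrt{2L_t(\theta)-1}$ (bounded by Stage~1), write $\sigma_t = \sigma_t^{(0)}(1+\varepsilon_t)$. The master equation immediately gives $\varepsilon_t = O(1/\log(1/t))$, and therefore $\log(\theta^2\sqrt{2\pi}/\sigma_t^3) = \log\bigl((2L_t(\theta)-1)^{3/2}\sqrt{2\pi}/|\theta|\bigr) + O(1/\log(1/t))$; feeding this back and solving for $\varepsilon_t$ to one further order produces~\eqref{bigo}. For~\eqref{littleo}, the hypothesis $I_t\sim\widehat I_t$ forces $L_t-\widehat L_t = \log(I_t/\widehat I_t)/\log t = o(1/\log(1/t))$; since the maps $L\mapsto |\theta|/\sqrt{2L-1}$ and $L\mapsto |\theta|\log((2L-1)^{3/2}\sqrt{2\pi}/|\theta|)/(2L-1)^{3/2}$ are locally Lipschitz on the (bounded) range of $L_t$, replacing $L_t$ by $\widehat L_t$ in~\eqref{bigo} perturbs the leading term by $o(1/\log(1/t))$ and the subleading term by $o(1/\log^2(1/t))$, both absorbed into the $o(1/\log(1/t))$ error of~\eqref{littleo}.

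The hard part will be Stage~2: executing the Black-Scholes expansion uniformly in $\sigma$ over the compact range from Stage~1 and tracking all remainders to order $O(1/\log^2(1/t))$, and then in Stage~3 handling the implicit appearance of $\sigma_t$ inside the logarithm of the subleading correction without losing the order of the error through the perturbative inversion.
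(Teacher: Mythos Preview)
Your approach is correct and matches the paper's: both expand $\log C^{\mathrm{BS}}(t,k_t,\sigma)/\log t$ uniformly in $\sigma$ to order $O(1/\log^2(1/t))$ and then invert asymptotically; the paper phrases the inversion as an implicit-function expansion of $F(s,z)=a(s)+zb(s)+O(z^2)$ with $z=1/\log(1/t)$, which is equivalent to your perturbation $\sigma_t=\sigma_t^{(0)}(1+\varepsilon_t)$, and the passage from~\eqref{bigo} to~\eqref{littleo} is handled identically. Two minor remarks: your Stage~1 argument for $\sigma_t\to\infty$ is imprecise as written (for instance if $\sigma_t\sqrt t\to\infty$ then $d_+\to+\infty$ and $C^{\mathrm{BS}}\to 1$, not $1-e^{k_t}$; a short case split on the behaviour of $\sigma_t\sqrt t$ repairs it, and the paper in fact leaves this boundedness step entirely implicit), and for $\theta<0$ the paper does not repeat the put expansion but instead passes to the share measure $d\wt\PP/d\PP|_{\mathcal F_t}=S_t$, uses $P(t,k_t)=e^{k_t}\wt C(t,-k_t)$ together with Black--Scholes put--call symmetry to identify $\sigma_t(\theta)$ with an implied volatility for a call at log-strike $-k_t>0$, and then invokes the $\theta>0$ case already proved.
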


Before proceeding with the application and proof of Theorem~\ref{ivexpand},
we make the following remarks in order to place it in context.
\begin{remarks}
\begin{enumerate}[(i)]
\item In the Black-Scholes model with volatility $\sigma>0$,
the following well-known expansion of the call option price 
in the 
$(t,k_t)$
maturity-strike regime~\eqref{eq:def_kt} holds (e.g. a straightforward calculation
using~\cite[Eq.~(3.10)]{gao.lee.11} yields the expansion):
\begin{eqnarray}
\label{eq:BS_expansion}
C^{\textrm{BS}}(t,k_t,\sigma) = \frac{\sigma}{\sqrt{2\pi}} t^{\frac{1}{2} + \frac{\theta^2}{2\sigma^2}}
\left\{\frac{\sigma^2}{\theta^2}\frac{1}{\log\frac{1}{t}}- 3
  \frac{\sigma^4}{\theta^4}\frac{1}{\log^2\frac{1}{t}} + O\left(\frac{1}{\log^3\frac{1}{t}}\right)\right\}
  \qquad\text{as}\quad t\downarrow0.  
\end{eqnarray}
In particular 
we have
$\log C^{\textrm{BS}}(t,k_t,\sigma) =  (\frac{1}{2} + \frac{\theta^2}{2\sigma^2})\log t + o(\log(1/t))$
as
$t\downarrow0$
and hence the assumption in~\eqref{eq:Assumption_IVOL} is satisfied
in the Black-Scholes model.
\label{rem:BS_expansion}

\item Note that the log-strike
$k_t$
in~\eqref{eq:def_kt} satisfies the assumptions of 
Theorem~\ref{infvar}.
For any L\'evy process $X$
as in Theorem~\ref{infvar},
formula~\eqref{infvar.eq}
and Remark~\eqref{rem:BS_expansion} above
imply 
\begin{equation}
\log C(t,k_t) =  \min\left\{\frac{3-\alpha_+}{2}, \frac{1}{2} + \frac{\theta^2}{2\sigma^2}\right\} \log t + o(\log(1/t))
  \qquad\text{as}\quad t\downarrow0.  
\label{eq:Infvar_option_asym}
\end{equation}
Since the minimum of the constants in front of 
$\log t$
is clearly larger than 
$1/2$,
assumption~\eqref{eq:Assumption_IVOL}
of Theorem~\ref{ivexpand} is satisfied.
As we shall soon see, it is the balance (as a function of
$\theta$)
between the two constants in~\eqref{eq:Infvar_option_asym} that determines the value of the 
limiting smile
$\sigma_0(\theta)$.
\label{rem:InfVar}

\item 
Let a L\'evy process $X$ be as in 
Proposition~\ref{finvar}
(i.e. with jumps of finite variation).
Formulae~\eqref{finvar1.eq}
and~\eqref{eq:BS_expansion}
imply
that the call option price
$C(t,k_t)$
under the model 
$S$
has the following asymptotic behaviour
\begin{equation}
\log C(t,k_t) =  \min\left\{1, \frac{1}{2} + \frac{\theta^2}{2\sigma^2}\right\} \log t + o(\log(1/t))
  \qquad\text{as}\quad t\downarrow0.  
\label{eq:Finvar_option_asym}
\end{equation}
In particular note that 
assumption~\eqref{eq:Assumption_IVOL}
is satisfied and that, in the case of jumps with 
finite variation, the constant in front of 
$\log t$
does not depend on the L\'evy measure but solely 
on the diffusion component of the model. 
\label{rem:FinVar}

\item In~\cite{gao.lee.11} the authors present a general result,
which translates  the asymptotic behaviour of the option prices, 
in a generic maturity-strike regime, to the asymptotics of the corresponding
implied volatilities. Unfortunately the results in~\cite{gao.lee.11}
do not apply in the regime 
$(t,k_t)$,
for 
$k_t$
in~\eqref{eq:def_kt}, 
since the standing assumption
of~\cite{gao.lee.11},
$\max\{0,\log(1/k_t)\}=o(\log(1/C(t,k_t)))$
(see~\cite[Eq.~(4.3)]{gao.lee.11}),
is not satisfied in our setting by~\eqref{eq:Infvar_option_asym}
and~\eqref{eq:Finvar_option_asym}.
We therefore have to establish Theorem~\ref{ivexpand},
which is applicable in our context 
as remarked in~\eqref{rem:InfVar} and~\eqref{rem:FinVar}
above.
\label{rem:Not_in_Gao_Lee}
\end{enumerate}
\end{remarks}

Before proving Theorem~\ref{ivexpand},
we apply it, together with Theorem~\ref{infvar}
and Proposition~\ref{finvar},
to derive the main asymptotic formula 
of the paper. 

\begin{corollary}\label{volimp.cor}
Let
$X$
be a L\'evy process
with the jump measure 
$\nu$
and
the Gaussian component
$\sigma^2\geq0$.
Pick
$\theta \in\RR\setminus\{0\}$, 
let
$k_t$ 
be the log-strike  
from~\eqref{eq:def_kt}
and let
$\sigma_t(\theta)$
be the implied volatility 
defined in~\eqref{eq:Simga_t_Def}.
Then the following statements hold.
\begin{enumerate}[(a)]
\item Let $X$ be a L\'evy process satisfying the assumptions of Theorem~\ref{infvar}.
\label{item:(a)_Cor}
Then the implied volatility 
$\sigma_t(\theta)$
takes the form
\begin{align}
\label{assform}
\sigma_t(\theta) = \left\{ \begin{array}{ll}
\frac{\pm\theta}{\sqrt{2-\alpha_\pm}}\left[1+I_\pm(t,\theta)\right]+ o\left(\frac{1}{\log \frac{1}{t}}\right), 
&\textrm{if $\pm\theta\geq\sigma \sqrt{2-\alpha_\pm}$ and $c_\pm>0$},\\
\sigma+ o\left(\frac{1}{\log \frac{1}{t}}\right), &\textrm{if $0<\pm\theta<\sigma \sqrt{2-\alpha_\pm}$ and $c_\pm>0$},
\end{array} \right.
\qquad\text{as $t\downarrow0$,}
\end{align}
where
\begin{equation}
\label{eq:f_pm}
I_\pm(t,\theta):= \frac{3-\alpha_\pm}{2(2-\alpha_\pm)}\frac{\log \log \frac{1}{t}}{\log \frac{1}{t}} 
+\frac{1}{(2-\alpha_\pm)}
\log\left(\frac{(2-\alpha_\pm)^{\frac{3}{2}} c_\pm\sqrt{2\pi }}{|\theta|^{{\alpha_\pm}} (\alpha_\pm-1)}\right)\frac{1}{\log\frac{1}{t}}, 
\qquad\text{for $t>0$,}
\end{equation}
and the sign
$\pm$
denotes either
$+$
or
$-$
throughout the formulae in~\eqref{assform} and~\eqref{eq:f_pm}.
In particular, the limiting smile 
$\sigma_0(\theta):=\lim_{t\downarrow 0} \sigma_t(\theta)$
exists 
for any
$\theta \in\RR\setminus\{0\}$
and
takes the form
$$
\sigma_0(\theta)=
\max\left\{\frac{\pm\theta}{\sqrt{2-\alpha_\pm}},
\sigma
\right\}\qquad\text{if $c_\pm>0$}.
$$

\item 
\label{item:(b)_Cor}
Let a L\'evy process $X$ be as in 
Proposition~\ref{finvar}
and let 
$\gamma_+,\gamma_-\geq0$
be equal to the following integrals
\begin{eqnarray*}
 \gamma_+:=\int_{(0,\infty)} (e^x-1)\nu(dx),\qquad
 \gamma_-:=\int_{(-\infty,0)} (1-e^x)\nu(dx).
\end{eqnarray*}
Then the implied volatility 
$\sigma_t(\theta)$
for short maturity 
$t$
is given by
\begin{align}
\label{eq:finv_var_assform}
\sigma_t(\theta) = \left\{ \begin{array}{ll}
\pm\theta\left[1+F_\pm(t,\theta)
\right]+ o\left(\frac{1}{\log \frac{1}{t}}\right), 
&\textrm{if $\pm\theta\geq\sigma$  and $\gamma_\pm>0$},\\ 
\sigma+ o\left(\frac{1}{\log \frac{1}{t}}\right), 
&\textrm{if $0<\pm\theta<\sigma$  and $\gamma_\pm>0$},
\end{array} \right.
\qquad\text{as $t\downarrow0$,}
\end{align}
where
\begin{equation}
\label{eq:F_pm}
F_\pm(t,\theta):= \frac{\log \log \frac{1}{t}}{\log \frac{1}{t}} 
+ \log\left(\frac{\gamma_\pm\sqrt{2\pi }}{|\theta|}\right)\frac{1}{\log\frac{1}{t}}, 
\qquad\text{for $t>0$,}
\end{equation}
and $\pm$
denotes either
$+$
or
$-$
throughout the formulae in~\eqref{eq:finv_var_assform} and~\eqref{eq:F_pm}.
The limit of the implied volatility smile as maturity tends to zero, 
$\sigma_0(\theta):=\lim_{t\downarrow 0} \sigma_t(\theta)$,
exists 
for 
$\theta \in\RR\setminus\{0\}$
and is equal to
$$
\sigma_0(\theta)=
\max\left\{\pm\theta, \sigma \right\}\qquad\text{if $\gamma_\pm>0$}.
$$
\end{enumerate}
\end{corollary}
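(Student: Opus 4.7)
The plan is to derive Corollary~\ref{volimp.cor} by applying Theorem~\ref{ivexpand} with the option-price asymptotic from Theorem~\ref{infvar} (for part~(\ref{item:(a)_Cor})) and Proposition~\ref{finvar} (for part~(\ref{item:(b)_Cor})). In each setting the out-of-the-money option price $I_t(\theta)$ has two competing leading-order contributions---a diffusion piece $C^{\textrm{BS}}(t,k_t,\sigma)$ and a jump piece---and the Black--Scholes expansion~\eqref{eq:BS_expansion} lets me compare their $\log t$-orders directly and select whichever dominates as $t\downarrow0$ to serve as $\wh I_t$ in Theorem~\ref{ivexpand}. I then compute $\wh L_t(\theta):=J_t(\wh I_t(\theta))$ and substitute into the $o$-form~\eqref{littleo}.

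For part~(\ref{item:(a)_Cor}), I assume $\theta>0$ (the $\theta<0$ case is symmetric via~\eqref{infvar.eq_Put}). Since the jump contribution $t k_t^{1-\alpha_+}c_+/(\alpha_+-1)$ has order $t^{(3-\alpha_+)/2}$ (up to logarithmic factors) while $C^{\textrm{BS}}(t,k_t,\sigma)$ has order $t^{1/2+\theta^2/(2\sigma^2)}$, comparing the two $\log t$-orders is equivalent to comparing $\theta^2$ and $\sigma^2(2-\alpha_+)$, which produces exactly the two regimes appearing in~\eqref{assform}. In the jump-dominated regime $\theta\geq\sigma\sqrt{2-\alpha_+}$, taking $\wh I_t$ equal to the jump term and using $\log k_t=\log\theta+\tfrac12\log t+\tfrac12\log\log(1/t)$ yields $\wh L_t(\theta)=(3-\alpha_+)/2+O(\log\log(1/t)/\log(1/t))$, so $2\wh L_t-1=(2-\alpha_+)(1+\delta_t)$ for an explicit small perturbation $\delta_t$. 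Plugging into~\eqref{littleo}, Taylor-expanding $x\mapsto|\theta|/\sqrt{x}$ at $x=2-\alpha_+$ to first order in $\delta_t$, and combining with the leading-order value $|\theta|\log[(2-\alpha_+)^{3/2}\sqrt{2\pi}/\theta]/((2-\alpha_+)^{3/2}\log(1/t))$ of the second term in~\eqref{littleo}, reassembles into~\eqref{assform}--\eqref{eq:f_pm} after collecting the logarithms. In the diffusion-dominated regime $\theta<\sigma\sqrt{2-\alpha_+}$, the choice $\wh I_t:=C^{\textrm{BS}}(t,k_t,\sigma)$ together with Remark~(\ref{rem:BS_expansion}) gives $\wh L_t\to\tfrac12+\theta^2/(2\sigma^2)$, so $|\theta|/\sqrt{2\wh L_t-1}\to\sigma$ and the remaining correction terms from~\eqref{littleo} are absorbed into $o(1/\log(1/t))$.

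Part~(\ref{item:(b)_Cor}) follows by the same scheme with Proposition~\ref{finvar} replacing Theorem~\ref{infvar}: the jump contribution $t\gamma_\pm$ now has order $t$ instead of $t^{(3-\alpha_\pm)/2}$, so the transition occurs at $|\theta|=\sigma$, and the analogous computation yields $\wh L_t(\theta)\to 1$ in the jump-dominated case with $-\log\log(1/t)/\log(1/t)$ and $-\log\gamma_\pm/\log(1/t)$ as the explicit next-order corrections, reproducing~\eqref{eq:finv_var_assform}--\eqref{eq:F_pm}. In both parts one may equivalently use the exact price $I_t(\theta)$ instead of $\wh I_t(\theta)$ in Theorem~\ref{ivexpand} by invoking the $\sim$-form of its hypothesis.

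The main technical obstacle is the careful bookkeeping of the logarithmic corrections: checking that the $O(1/\log^2(1/t))$ error in~\eqref{bigo} is sharpened to $o(1/\log(1/t))$ via the $o$-version~\eqref{littleo} (which requires the genuine $\sim$ equivalence supplied by Theorem~\ref{infvar} and Proposition~\ref{finvar}), and verifying that the first-order expansion of $(2\wh L_t-1)^{-1/2}$ combines exactly with the leading order of the second term of~\eqref{littleo} to produce the stated coefficients in~\eqref{eq:f_pm} and~\eqref{eq:F_pm}. A secondary subtlety is the boundary case $\pm\theta=\sigma\sqrt{2-\alpha_\pm}$ (resp.\ $\pm\theta=\sigma$ in part~(\ref{item:(b)_Cor})): there the two candidate dominant contributions share the same polynomial $t$-order but differ by a positive power of $\log(1/t)$, so $\wh I_t$ equal to the jump term alone is still $\sim I_t(\theta)$, and the jump-regime formula at the boundary collapses to $\sigma$, matching the diffusion-regime value and confirming continuity of $\sigma_0(\theta)$.
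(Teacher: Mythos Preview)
Your proposal is correct and follows essentially the same route as the paper: in each regime you choose $\wh I_t$ to be whichever of the Black--Scholes term or the jump term dominates (determined by comparing the exponents $\tfrac12+\theta^2/(2\sigma^2)$ and $(3-\alpha_\pm)/2$, resp.\ $1$), verify assumption~\eqref{eq:Assumption_IVOL}, compute $\wh L_t(\theta)$, and Taylor-expand~\eqref{littleo}. The paper's proof is organised identically, treating the four sign/regime combinations in turn and handling $\theta<0$ through the put price and put--call symmetry just as you indicate via~\eqref{infvar.eq_Put}; your explicit remark on the boundary case $\pm\theta=\sigma\sqrt{2-\alpha_\pm}$ is a nice addition that the paper leaves implicit.
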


\begin{remarks}
\begin{enumerate}[(i)]
\item Recall display~\eqref{halpha} in Theorem~\ref{infvar}
and note 
that the assumptions 
$c_+>0$
and 
$c_->0$
of Corollary~\ref{volimp.cor}~\eqref{item:(a)_Cor}
mean that,
as 
$x\downarrow0$,
the tails 
around zero
of the L\'evy  measure 
$\nu$
of
$X$
behave as
$\nu((x,\infty))\sim c_+x^{-\alpha_+}$
and
$\nu((-\infty,-x))\sim c_- x^{-\alpha_-}$.
Note further that, 
once we have identified the precise rate of the
tail behaviour of 
$\nu$
at zero,
the constants
$c_+$
and
$c_-$
do not feature in the limiting formula
$\sigma_0(\theta)$.

\item 
The assumption  
$\gamma_\pm>0$
in
Corollary~\ref{volimp.cor}~\eqref{item:(b)_Cor}
ensures that the process
$X$
has positive jumps
when
$\theta>0$
and negative jumps
when
$\theta<0$
as we are only interested in the asymptotic behaviour
of the implied volatility in the presence of jumps.
\end{enumerate}
\end{remarks}

In Section~\ref{sec:Proof_Cor}
we derive Corollary~\ref{volimp.cor}
from Theorem~\ref{ivexpand}
and in  Section~\ref{sec:Proof_Thm_ivexpand}
we establish 
Theorem~\ref{ivexpand}.

\subsection{Proof of Corollary~\ref{volimp.cor}} \textbf{(a)} Assume first that 
\label{sec:Proof_Cor}
$\sigma\sqrt{2-\alpha_+} > \theta>0$. 
Define
$ \wh C_t := C^{\textrm{BS}}(t,k_t,\sigma)$
and note that~\eqref{eq:BS_expansion},
the definition of 
$k_t$ in~\eqref{eq:def_kt}
and~\eqref{infvar.eq}
of Theorem~\ref{infvar} imply
\begin{equation}
\label{eq:sim_relation}
C(t,k_t)\sim \wh C_t,
\qquad\text{and hence}\quad
\frac{\log C(t,k_t)}{\log t}=
\frac{\log \wh C_t}{\log t}+
o\left(\frac{1}{\log \frac{1}{t}}\right),
\qquad\text{as $t\downarrow0$,}
\end{equation}
where
$C(t,k_t)$
denotes the call option price with maturity 
$t$
and strike 
$e^{k_t}$
under the exponential L\'evy model
$e^X$.
Assumption~\eqref{eq:Assumption_IVOL}
of Theorem~\ref{ivexpand}
is therefore satisfied
by Remark~\eqref{rem:BS_expansion}
after Theorem~\ref{ivexpand}. 
The formula for 
$\wh L_t(\theta) = 
\log \wh C_t/\log t- (\log \log \frac{1}{t})\log \frac{1}{t}$ 
takes the form
\begin{equation}
\label{eq:L_theta_BS}
\wh L_t(\theta) = 
\frac{1}{2} +
\frac{\theta^2}{2\sigma^2} - \log\left(\frac{\sigma^3}{\theta^2 \sqrt{2\pi}}\right)
\frac{1}{\log \frac{1}{t}}  +
o\left(\frac{1}{\log \frac{1}{t}}\right),
\qquad\text{as $t\downarrow0$,}
\end{equation}
The 
formula in~\eqref{littleo}
of
Theorem~\ref{ivexpand},
together with~\eqref{eq:L_theta_BS}
and the Taylor expansions 
in $\log(1/t)$
as
$t\downarrow0$
\begin{align*}
\frac{\theta}{\sqrt{2\wh L_t(\theta)-1}} & = 
\sigma
\left[
1+\frac{\sigma^2}{\theta^2}   \log\left(\frac{ \sigma^3}{\theta^2\sqrt{2\pi}}\right) \frac{1}{\log \frac{1}{t}}\right]
+ o\left(\frac{1}{\log \frac{1}{t}}\right), 
\end{align*}
\begin{align*}
\frac{\theta
  \log\frac{\left(2\wh L_t(\theta)-1\right)^{\frac{3}{2}}\sqrt{2\pi}}{\theta}}{\left(2\wh L_t(\theta)-1\right)^{\frac{3}{2}}}\frac{1}{\log
\frac{1}{t}} 
& = 
\frac{\sigma^3 \log\frac{\theta^2\sqrt{2\pi}}{\sigma^3}}{\theta^2}\frac{1}{\log \frac{1}{t}} 
+ o\left(\frac{1}{\log \frac{1}{t}}\right),
\end{align*}
yield the formula in~\eqref{assform}.

In the case
$\sigma\sqrt{2-\alpha_+} \leq \theta$,
the relation~\eqref{eq:sim_relation}
is satisfied by
$\wh C_t := \frac{tk_t^{1-\alpha_+}c_+}{\alpha_+-1}$.
This follows directly from the definition of
$k_t$ in~\eqref{eq:def_kt}
and 
Theorem~\ref{infvar} 
(see formula~\eqref{infvar.eq}).
An analogous argument as the one above shows that
in this case the assumptions
of Theorem~\ref{ivexpand}
are also satisfied.
By definition
of
$\wh L_t(\theta)$
in Theorem~\ref{ivexpand},
we find
$$
2\wh L_t(\theta)-1 = (2-\alpha_+)\left[
1-\frac{3-\alpha_+}{2-\alpha_+}  \frac{\log \log \frac{1}{t}}{\log \frac{1}{t}}
- \frac{2}{2-\alpha_+} \log\left(\frac{\theta^{1-\alpha_+} c_+}{\alpha_+-1}\right) \frac{1}{\log \frac{1}{t}}  
\right].
$$
By Taylor's formula the following asymptotic relations 
hold
as
$t\downarrow0$:
\begin{align*}
\frac{\theta}{\sqrt{2\wh L_t(\theta)-1}} & = 
\frac{\theta}{\sqrt{2-\alpha_+}}\left[
1+\frac{3-\alpha_+}{2(2-\alpha_+)}  \frac{\log \log \frac{1}{t}}{\log \frac{1}{t}}
+ \frac{1}{2-\alpha_+} \log\left(\frac{\theta^{1-\alpha_+} c_+}{\alpha_+-1}\right) \frac{1}{\log \frac{1}{t}}\right]
+ o\left(\frac{1}{\log \frac{1}{t}}\right), 
\end{align*}
and
\begin{align*}
\frac{\theta
  \log\frac{\left(2\wh L_t(\theta)-1\right)^{\frac{3}{2}}\sqrt{2\pi}}{\theta}}{\left(2\wh L_t(\theta)-1\right)^{\frac{3}{2}}}\frac{1}{\log
\frac{1}{t}} 
& = 
\frac{\theta \log\frac{\left(2-\alpha_+\right)^{\frac{3}{2}}\sqrt{2\pi}}{\theta}}{\left(2-\alpha_+\right)^{\frac{3}{2}}}\frac{1}{\log \frac{1}{t}} 
+ o\left(\frac{1}{\log \frac{1}{t}}\right).
\end{align*}
Substituting these expressions into~\eqref{littleo}
establishes the formula in~\eqref{assform}.

Assume now that 
$-\sigma\sqrt{2-\alpha_-} < \theta<0$. 
Define
$\wh P_t:= P^{\textrm{BS}}(t,k_t,\sigma)$,
where
$P^{\textrm{BS}}(t,k_t,\sigma)$
is the put option price in the Black-Scholes model,
and recall the well-known put-call symmetry
\begin{equation}
\label{eq:Put_Call_Sy_BS}
P^{\textrm{BS}}(t,k_t,\sigma)=
e^{k_t}C^{\textrm{BS}}(t,-k_t,\sigma),
\end{equation}
which holds since 
the laws of minus the log-spot
under the share measure (i.e. the pricing measure where 
the risky asset is a numeraire)
and the log-spot under the risk-neutral measure 
(i.e. the measure where the riskless asset is the numeraire)
coincide.
Analogous to the case above,~\eqref{eq:BS_expansion}
with the put-call symmetry,
the definition of 
$k_t$ in~\eqref{eq:def_kt}
and~\eqref{infvar.eq_Put}
of Theorem~\ref{infvar} imply
\begin{equation}
\label{eq:sim_relation_Put}
P(t,k_t)\sim \wh P_t,
\qquad\text{and hence}\quad
\frac{\log P(t,k_t)}{\log t}=
\frac{\log \wh P_t}{\log t}+
o\left(\frac{1}{\log \frac{1}{t}}\right),
\qquad\text{as $t\downarrow0$,}
\end{equation}
where
$P(t,k_t)$
is 
the put option price 
under the exponential L\'evy model
$e^X$.
Therefore the assumptions of Theorem~\ref{ivexpand}
are satisfied and 
$\wh L_t(\theta)$
takes the form~\eqref{eq:L_theta_BS}.
Note that the right-hand side of~\eqref{eq:L_theta_BS}
depends solely on the even powers of 
$\theta$
and hence the 
fact
$\theta<0$
does not influence the asymptotic behaviour 
of
$\wh L_t(\theta)$.
The proof of formula~\eqref{assform} 
now follows in the same way as in the call case above.

In the case
$-\sigma\sqrt{2-\alpha_-} \geq \theta$
we define
$\wh P_t := \frac{t(-k_t)^{1-\alpha_+}c_+}{\alpha_+-1}$.
Under this assumption,
the relation~\eqref{eq:sim_relation_Put}
is satisfied by~\eqref{infvar.eq_Put}
of Theorem~\ref{infvar} and the rest of the proof
follows along the same lines as in the case
$\sigma\sqrt{2-\alpha_+} \leq \theta$.
This proves formula~\eqref{assform}.

\noindent \textbf{(b)} The proof of part~(b) of the corollary is based on 
Proposition~\ref{finvar} and Theorem~\ref{ivexpand}.
The steps are analogous to the ones in the proof of part~(a):
\begin{eqnarray*}
& \text{if}\quad\sigma > \theta>0,\quad\text{define}\>\> \wh C_t := C^{\textrm{BS}}(t,k_t,\sigma);& 
\text{if}\quad\sigma \leq \theta,\quad\text{define}\>\> \wh C_t := t \gamma_+; \\
&\text{if}\quad-\sigma < \theta<0,\quad\text{define}\>\> \wh P_t := P^{\textrm{BS}}(t,k_t,\sigma);&  
\text{if}\quad-\sigma \geq \theta,\quad\text{define}\>\> \wh P_t := t \gamma_-. 
\end{eqnarray*}
The details of the calculations are left to the reader.

\subsection{Proof of Theorem~\ref{ivexpand}}
\label{sec:Proof_Thm_ivexpand}
We first assume that 
$\theta>0$.
Equality~\eqref{eq:BS_expansion}
implies the following
\begin{align*}
\frac{\log C^{\textrm{BS}}(t,k_t,s)}{\log t} -\frac{\log\log\frac{1}{t}}{\log \frac{1}{t}} 
& = \frac{1}{2} + \frac{\theta^2}{2s^2}   - \frac{1}{\log \frac{1}{t}}\log \frac{s^3}{\theta^2\sqrt{2\pi}}
+  3
  \frac{s^2}{\theta^2}\frac{1}{\log^2\frac{1}{t}} +
  O\left(\frac{1}{\log^3\frac{1}{t}}\right)
\end{align*}
as
$t\downarrow0$
for any
$s>0$.
Define
$$
F(s,z) :=  -z{\log C^{\textrm{BS}}(e^{-1/z},k_{e^{-1/z}},s)} +z\log z
$$
and note that 
$F(s,z)$
corresponds to the left-hand side of the above formula with the
change of variable $z = \frac{1}{\log \frac{1}{t}}$. The 
expansion shows that $F(s,z)$ is regular as $z\to 0$ and
the following equality holds
$$
F(s,z) = \frac{1}{2} + \frac{\theta^2}{2s^2}   - z\log \frac{s^3}{\theta^2\sqrt{2\pi}}
+  3
  \frac{s^2}{\theta^2}z^2 +
  O\left(z^3\right).
$$
The expansion for the inverse mapping can be deduced from this
expression as follows. To keep the formulae simple, we give the
expansion up to $O(z^2)$: 
$$
F(s,z) = a(s) + zb(s) + O(z^2), \qquad\text{where}\quad
a(s)=\frac{1}{2} + \frac{\theta^2}{2s^2},\quad
b(s)=\log \frac{s^3}{\theta^2\sqrt{2\pi}}.
$$
Denote by  
$F^{-1}(y,z)$ 
the unique positive solution of the equation 
$F(s,z)=y$,
where
$y$
equals
$J_{e^{-1/z}}(x)$
(see the statement of Theorem~\ref{ivexpand}
for the definition of 
$J_t(x)$)
and 
$x$
is any arbitrage-free
call
option price with maturity 
$e^{-1/z}$
and
strike
$k_{e^{-1/z}}$.
The uniqueness of the quantity 
$F^{-1}(y,z)$ 
is equivalent to the fact that the
implied volatility is a well defined quantity.

An approximate expression for 
$y$
is given by
$$
y = a(F^{-1}(y,z)) + z b(F^{-1}(y,z)) + O(z^2)
$$
and hence we find
$$
a^{-1}(y) = a^{-1}(a(F^{-1}(y,z)) + z b(F^{-1}(y,z)) + O(z^2)).
$$
Using the regularity of the coefficient $a$
in the neighbourhood of the point
$F^{-1}(y,z)>0$,
we can expand the inverse $a^{-1}$ 
around the point $a(F^{-1}(y,z))$
as follows:
$$
a^{-1}(y) = F^{-1}(y,z) + (a^{-1})' (a(F^{-1}(y,z))) b(F^{-1}(y,z))z + O(z^2).
$$
In view of this expression, and using once again the regularity of the
coefficients $a$ and $b$, we can replace $F^{-1}(y,z)$ with
$a^{-1}(y)$ in the second term, obtaining
$$
a^{-1}(y) = F^{-1}(y,z) + (a^{-1})' (y) b(a^{-1}(y))z + O(z^2).
$$ 
Hence,
the following asymptotic equalities  hold true:
\begin{align*}
F^{-1}(y,z) &= a^{-1}(y) - \frac{b(a^{-1}(y))}{a'(a^{-1}(y))}z +
O(z^2)\\
& = \frac{\theta}{\sqrt{2y-1}} + \frac{\theta
  \log\frac{(2y-1)^{\frac{3}{2}}\sqrt{2\pi}}{\theta}}{(2y-1)^{\frac{3}{2}}}z + O(z^2).
\end{align*}
Substituting the expression for $F$, we find an expansion for the
implied volatility 
$\sigma_t(\theta)$
given in~\eqref{bigo}.
Now, $ C(t,k_t)\sim \wh C_t$ implies that $\frac{\log C(t,k_t)}{\log t}
- \frac{\log \wh C_t }{\log t} = o(\log^{-1} t^{-1})$. 
Since all the coefficients in expansion~\eqref{bigo}
are regular, the additional term arising from this difference may be ignored 
in an expansion up to order
$o(\log^{-1} t^{-1})$ 
and~\eqref{littleo} follows. 

The formulae in the theorem in the case 
$\theta<0$ 
will be established by applying the result for 
the positive log-strike under the share measure.
More precisely, let
$\PP$
denote the original risk-neutral measure under which
the process
$S$
is a positive martingale started at one.
For each time 
$t$,
we define the share measure 
$\wt \PP$
on the 
$\sigma$-algebra 
$\mathcal F_t$
of events that can occur up to time
$t$
via its Radon-Nikodym derivative
$\frac{d\wt \PP}{d\PP}|_{\mathcal F_t}:=S_t$
and note that the following relationship holds
for any log-strike
$k\in\RR$:
\begin{equation}
\label{eq:put_call_sym}
P(t,k)=\EE\left[(e^k-S_t)^+\right]
=e^{k}\wt \EE\left[(S_t^{-1}-e^{-k})^+\right]=
e^k \wt C(t,-k),
\end{equation}
where 
$\wt C(t,-k):=\wt \EE\left[(S_t^{-1}-e^{-k})^+\right]$
denotes the expectation under the share measure 
$\wt \PP$
of a call payoff with strike
$e^{-k}$,
where the evolution of the risky asset is given by 
$S^{-1}$.
Note that 
$S^{-1}$
is a positive martingale started at one
under
$\wt \PP$
and hence 
$\wt C(t,-k)$
represents and arbitrage-free call option price. 
Furthermore, the put-call symmetry formula in the
Black-Scholes model (see~\eqref{eq:Put_Call_Sy_BS})
and the equality in~\eqref{eq:put_call_sym}
mean that the implied volatility 
$\wh\sigma(t,k)$
defined by the put price
$P(t,k)$
coincides with the implied volatility
$\wh{\wt\sigma}(t,-k)$
defined by the call price
$\wt C(t,-k)$
(see beginning of Section~\ref{sec:Asym_IVOL}
for the definition of 
$\wh \sigma(t,k)$).

Note that, since 
$\theta<0$,
we now have
$-k_t>0$
and 
$\sigma_t(\theta)=\wt \sigma_t(-\theta)$,
where
$\wt \sigma_t(-\theta)$
denotes 
$\wh{\wt\sigma}(t,-k_t)$.
In order to apply the formula in~\eqref{bigo} 
to 
$\wt C(t,-k_t)$,
we have to ensure that assumption~\eqref{eq:Assumption_IVOL}
is satisfied.
Since~\eqref{eq:Assumption_IVOL} 
holds for 
$P(t,k_t)$
and
$k_t=o(\log t)$,
the equality  in~\eqref{eq:put_call_sym}
implies~\eqref{eq:Assumption_IVOL} 
for 
$\wt C(t,-k_t)$.
Therefore formula~\eqref{bigo}
gives an asymptotic expansion 
of
$\sigma_t(\theta)=\wt \sigma_t(-\theta)$
in terms of 
$\wt L_t(-\theta):=J_t(\wt C(t,-k_t))$.
Since equality~\eqref{eq:put_call_sym}
implies
$$
L_t(\theta)=\wt L_t(-\theta)-\theta\sqrt{t/\log(1/t)} 
=\wt L_t(-\theta)+O\left(\frac{1}{\log^2\frac{1}{t}}\right)\qquad\text{as $t\downarrow0$}
$$
and the two leading order terms in~\eqref{bigo}
are regular in 
$\wt L_t(-\theta)$,
the asymptotic expansion in~\eqref{bigo}
also holds when 
$\wt L_t(-\theta)$
is replaced by
$L_t(\theta)$.
The formula in~\eqref{littleo}
now follows by the same argument as in the
case of the positive log-strike. 
This concludes the proof of the theorem.

\section{Numerical results}
\label{sec:Numerical}

In this section, we present some numerical illustrations for the
convergence results discussed in Section~\ref{sec:Asym_IVOL}.
We focus on the generalised tempered stable L\'evy process 
$X$
with L\'evy density
\begin{eqnarray}
\label{eq:tempered_Stable}
\nu(x) = \frac{c_+ e^{-\lambda_+ x}}{|x|^{1+\alpha_+}}1_{\{x>0\}} +
\frac{c_- e^{-\lambda_- |x|}}{|x|^{1+\alpha_-}}1_{\{x<0\}}. 
\end{eqnarray}
This class of processes includes the widely used CGMY models 
(see e.g.~\cite{finestructure}).
For this process, the price of a European call option with pay-off
$(S_0e^{X_t}-K)^+$ at time $t$ can be computed as
\begin{align}
\frac{K}{2\pi} \int_{\mathbb R} \left(\frac{K}{S_0}\right)^{iu-R}\frac{\phi_t(-u-iR) }{(R-iu) (R-1-iu)}
du,\label{fourierinv}
\end{align}
where $\phi_t$ is the characteristic function of $X_t$ and $R>1$
(see e.g.~\cite{carrmadan} or~\cite{tankov.09}). 
We compute the integral in~\eqref{fourierinv} with an
adaptive integration algorithm. 

\subsection{Testing the algorithm} To 
ensure that the prices returned by our algorithm are correct, we
first compare them to the values computed in \cite{wang2007robust}
with their approximate ``fixed point'' algorithm (PDE
discretisation). The following table shows that the values we obtain
are very similar with the small discrepancy probably due to the
discretisation error of~\cite{wang2007robust}.\\ 

\begin{center}
\begin{tabular}{|c|c|c|c|c|c|c|c|c|c|}
\hline
$S$ & $K$ & $T$ & $r$ & $c=c_+=c_-$ & $\lambda_+$ & $\lambda_-$ & $\alpha=\alpha_+=\alpha_-$ &
Value (\cite{wang2007robust}) & Our value \\
\hline
$90$ & $98$ & $0.25$ & $0.06$ & $16.97$ & $29.97$ & $7.08$ & $0.6442$ & $16.212578$ & $16.211904$ \\
$90$ & $98$ & $0.25$ & $0.06$ & $0.42$ & $191.2$ & $4.37$ & $1.0102$ & $2.2307031$ & $2.2306558$ \\
$10$ & $10$ & $0.25$ & $0.1$ & $1$ & $9.2$ & $8.8$ & $1.8$ & $4.3714972$ & $4.3898433$\\
\hline
\end{tabular}
\end{center}

\subsection{Convergence of the at-the-money (ATM) options}
In this section we fix the parameters of the tempered stable process at
\begin{equation}
\label{eq:Parm_Values}
c_+ = c_- = c= 1,\quad \lambda_+ = \lambda_- = 3,\quad \alpha_+ = \alpha_- = \alpha = 1.5
\end{equation}
and $S_0=1$. First we analyse the rate of convergence to zero
of the ATM options. It follows from the results in~\cite{muhle2011small} 
that the ATM option price satisfies
$$
\mathbb E [(e^{X_t} - 1)^+] \sim t^{1/\alpha} \mathbb E[(Z^*)^+],
$$
where $Z^*$ is a stable random variable with the L\'evy density 
$\frac{c}{|x|^{1+\alpha}}$.  
Furthermore it is known that
$$
\mathbb E[(Z^*)^+] = \frac{(2c)^{1/\alpha}}{\pi} \Gamma(1-1/\alpha)\left(-\Gamma(\alpha)\cos\frac{\pi\alpha}{2}\right)^{1/\alpha}=:C.
$$
Figure~\ref{mkn} plots the dependence of the normalised option price
$t^{-1/\alpha}\mathbb E [(e^{X_t}-1)^+]$ 
and the normalised ``Bachelier'' price 
$t^{-1/\alpha}\mathbb E [(X_t)^+] $ on 
$\log t$,
i.e. on time to maturity expressed on the log-scale. 
The horizontal line in Figure~\ref{mkn} corresponds to the value of the constant $C$. 
The desired convergence is clearly visible. 


\begin{figure}
\centerline{\includegraphics[width = 0.6\textwidth]{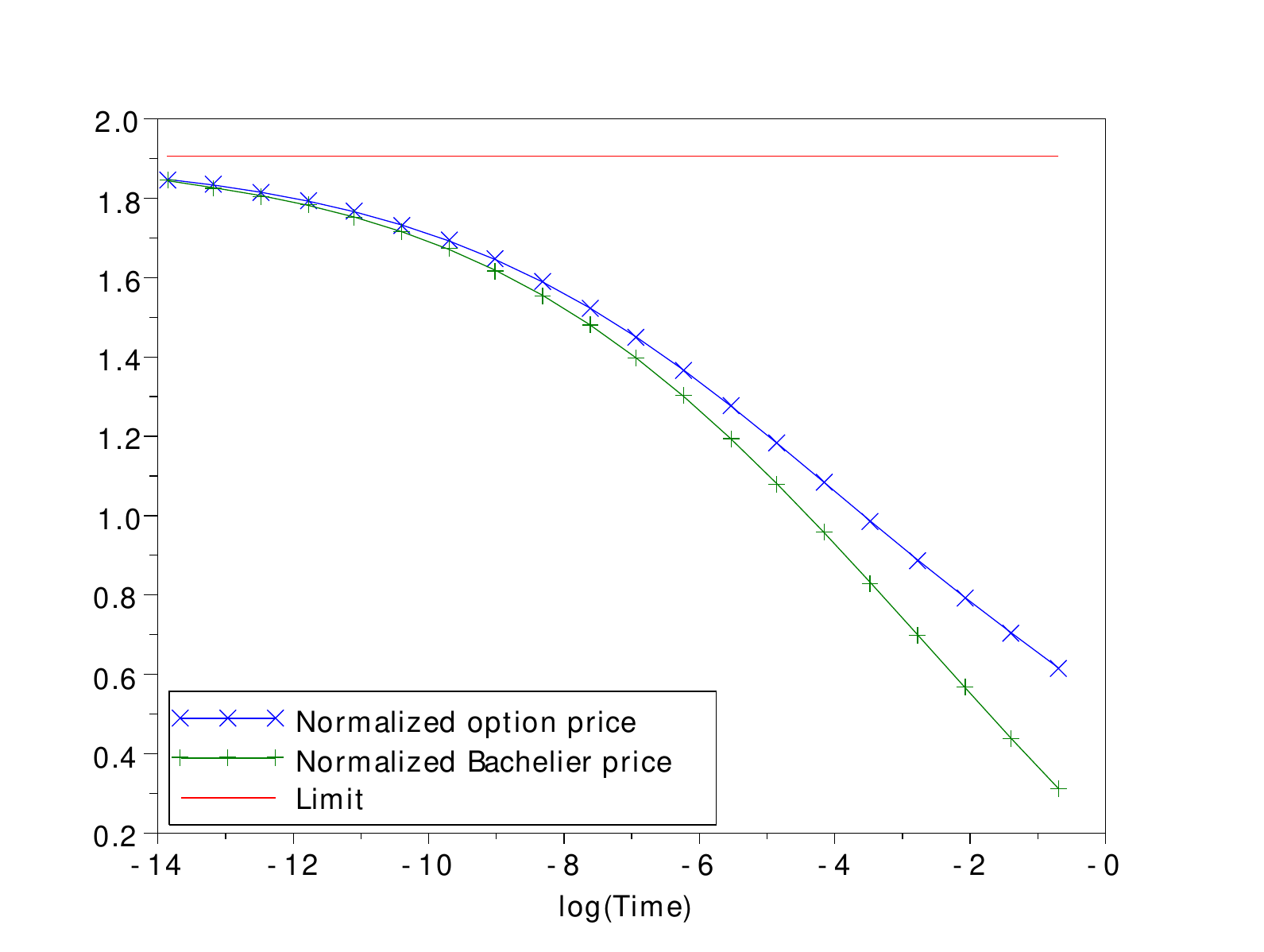}}
\caption{Convergence of the re-normalised price for ATM options:
the parameter values of the tempered stable process, which
models the log-stock are given in~\eqref{eq:Parm_Values}.}
\label{mkn}
\end{figure}

\subsection{Convergence of option prices with variable strike}
In this section we investigate numerically the convergence of the 
out-of-the-money (OTM) option prices given in Theorem~\ref{infvar}.
The parameter values for the underlying process are given in~\eqref{eq:Parm_Values}.
Note that in the case of the tempered stable L\'evy process 
with L\'evy density~\eqref{eq:tempered_Stable},
the limits in~\eqref{halpha} of Theorem~\ref{infvar}
take the form
$$
\lim_{x\downarrow0}x^\alpha\nu((x,\infty))=
\lim_{x\downarrow0}x^\alpha\nu((-\infty,-x))=
\frac{c}{\alpha}.
$$

Figure~\ref{cvgprice} shows the dependence of the normalised option 
and ``Bachelier'' prices,
respectively given by 
$$\frac{\mathbb E[(e^{X_t} - e^{k_t})^+]}{t k_t^{1-\alpha}}\qquad\text{ and }\qquad
\frac{\mathbb E[(X_t - k_t)^+]}{t k_t^{1-\alpha}},$$ 
on time to maturity in log-scale, where 
$$k_t = t^{1/\alpha'} 
\qquad\text{with}\qquad
\alpha' = 1.9.$$  
The horizontal dotted line shows the limiting value 
$\frac{c}{\alpha(\alpha-1)} = \frac{4}{3}$  
predicted by Theorem~\ref{infvar}.

Similarly, Figure~\ref{tlogt} plots the dependence on time to maturity 
(on the log-scale) of the normalised option price 
$$\frac{\mathbb E[(e^{X_t} - e^{k_t})^+]}{t k_t^{1-\alpha}}\qquad 
\text{for}\quad k_t = \theta \sqrt{t \log \frac{1}{t}}\qquad\text{and}\qquad 
\theta\in\{0.1,0.2,0.3\}.$$ 
As in Figure~\ref{cvgprice}, the limiting horizontal dotted line 
is given be $\frac{c}{\alpha(\alpha-1)} = \frac{4}{3}$.  

\begin{figure}
\centerline{\includegraphics[width = 0.6\textwidth]{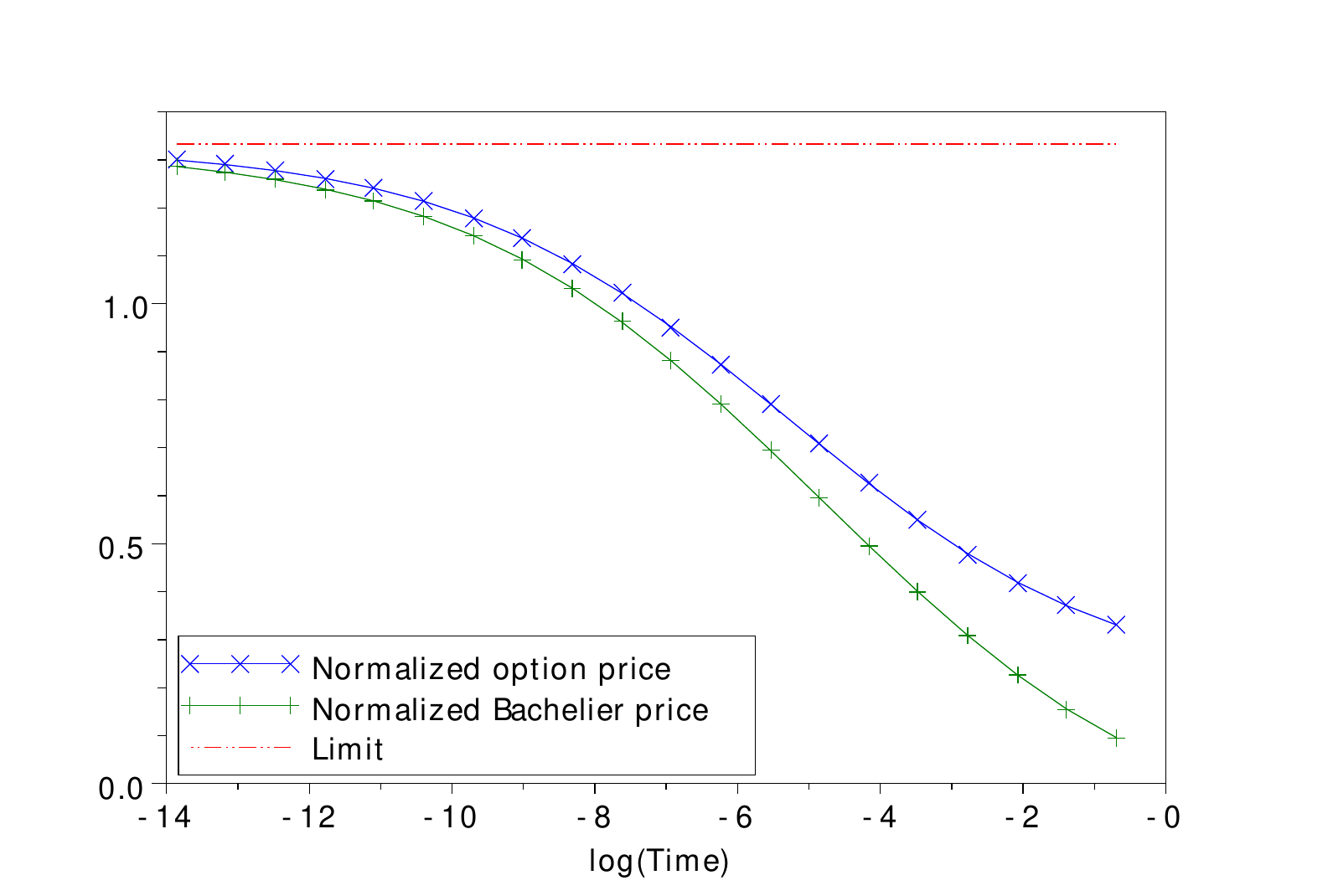}}
\caption{Convergence of the re-normalised price for OTM options: $k_t = t^{\frac{1}{\alpha'}}$
with $\alpha'=1.9$
and the other parameters of the process 
$X$ are given in~\eqref{eq:Parm_Values}. }
\label{cvgprice}
\end{figure}

\begin{figure}
\centerline{\includegraphics[width = 0.6\textwidth]{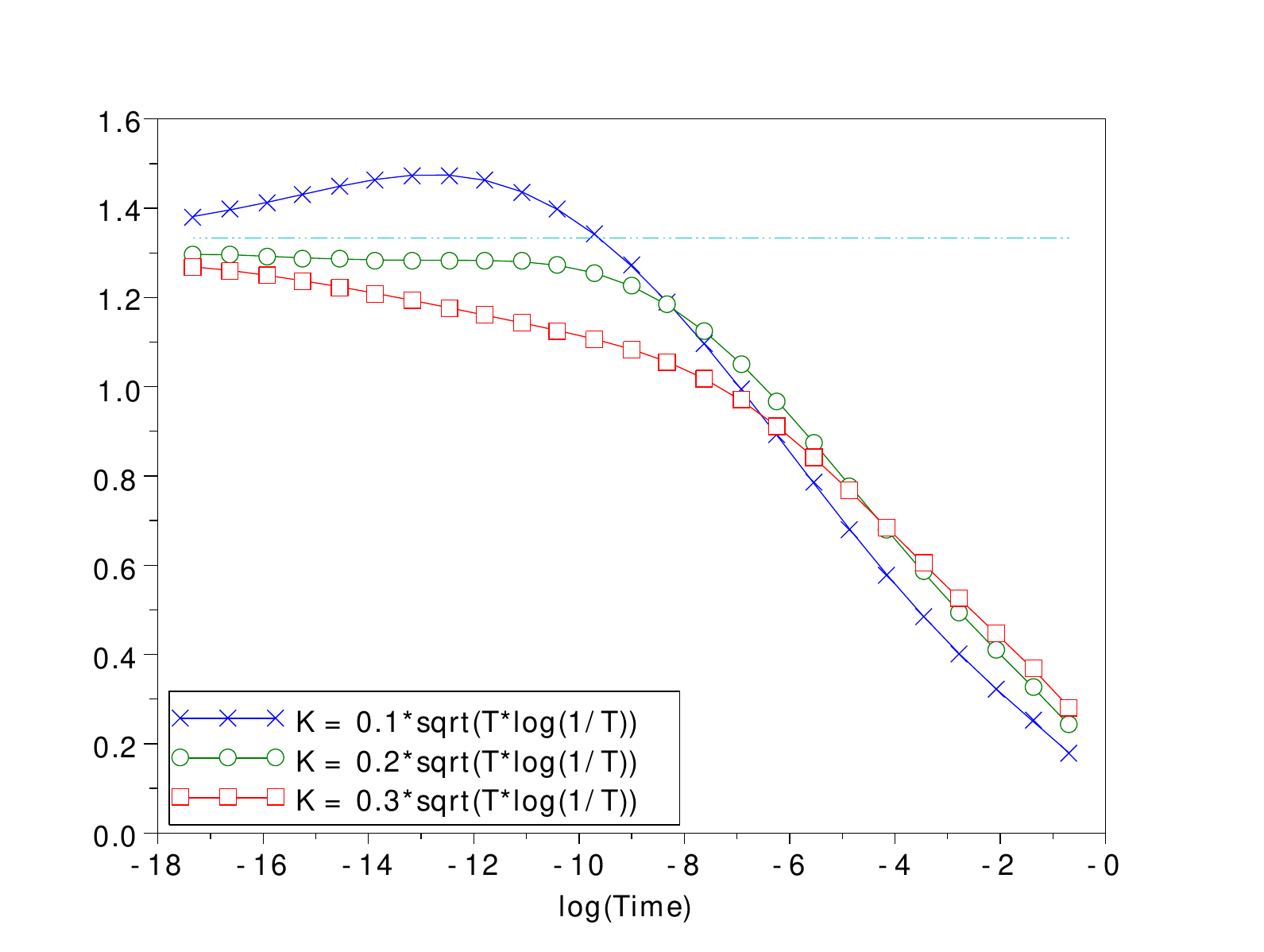}}
\caption{Convergence of the re-normalised price for OTM options: $k_t = \theta \sqrt{t \log \frac{1}{t}}$
with 
$\theta\in\{0.1,0.2,0.3\}$
and other parameters given in~\eqref{eq:Parm_Values}.}
\label{tlogt}
\end{figure}

\subsection{Convergence of the implied volatilities to the limiting smile}
\label{subsec:IVol_Asym}
In this section we illustrate the convergence of the implied
volatility (expressed as function of the re-normalised strike $\theta$)
to the limit 
$\sigma_0(\theta)$
given in Corollary~\ref{volimp.cor}.
In order to test the formula both with and without the diffusion
component we fix two models: the first is a pure jump
tempered stable L\'evy process 
with the following parameter values 
$$c_+ = c_- = c= 0.01,\qquad \lambda_+ = \lambda_- = 3,\qquad 
\alpha_+ = \alpha_- = \alpha = 1.5,$$ 
which correspond to the unit annualised volatility of about $14\%$.
The second model is the same tempered stable process
with an added diffusion component of 
volatility 
$\sigma = 0.2$. 

Recall that the limiting formula for positive 
$\theta$
is
$\sigma_0(\theta)=\max\{\sigma,\frac{\theta}{\sqrt{2-\alpha}}\}$.
Figure~\ref{ivol} plots the right wing of the implied volatility
smile (as function of $\theta$) for
different times to maturity when a diffusion component is present
(left graph) and diffusion component is absent (right graph),
together with the limiting shape 
$\sigma_0(\theta)$.
The convergence to the limit is
visible in both graphs but slow, because the error terms in 
Corollary~\ref{volimp.cor} are logarithmic in time. 
Nevertheless, the following observations can be made already at 
``not such small'' times:
\begin{itemize}
\item The smile is remarkably stable in time, when it is
expressed as function of the re-normalised variable $\theta$. In
particular, the slope of the wings predicted by Corollary~\ref{volimp.cor} 
is achieved rather quickly. 
\item The distinction between the U-shaped smile in the presence of
  a diffusion component and the V-shaped smile in the pure jump case,
  is clearly visible. 
\end{itemize}

\begin{figure}
\centerline{\includegraphics[width = 0.55\textwidth]{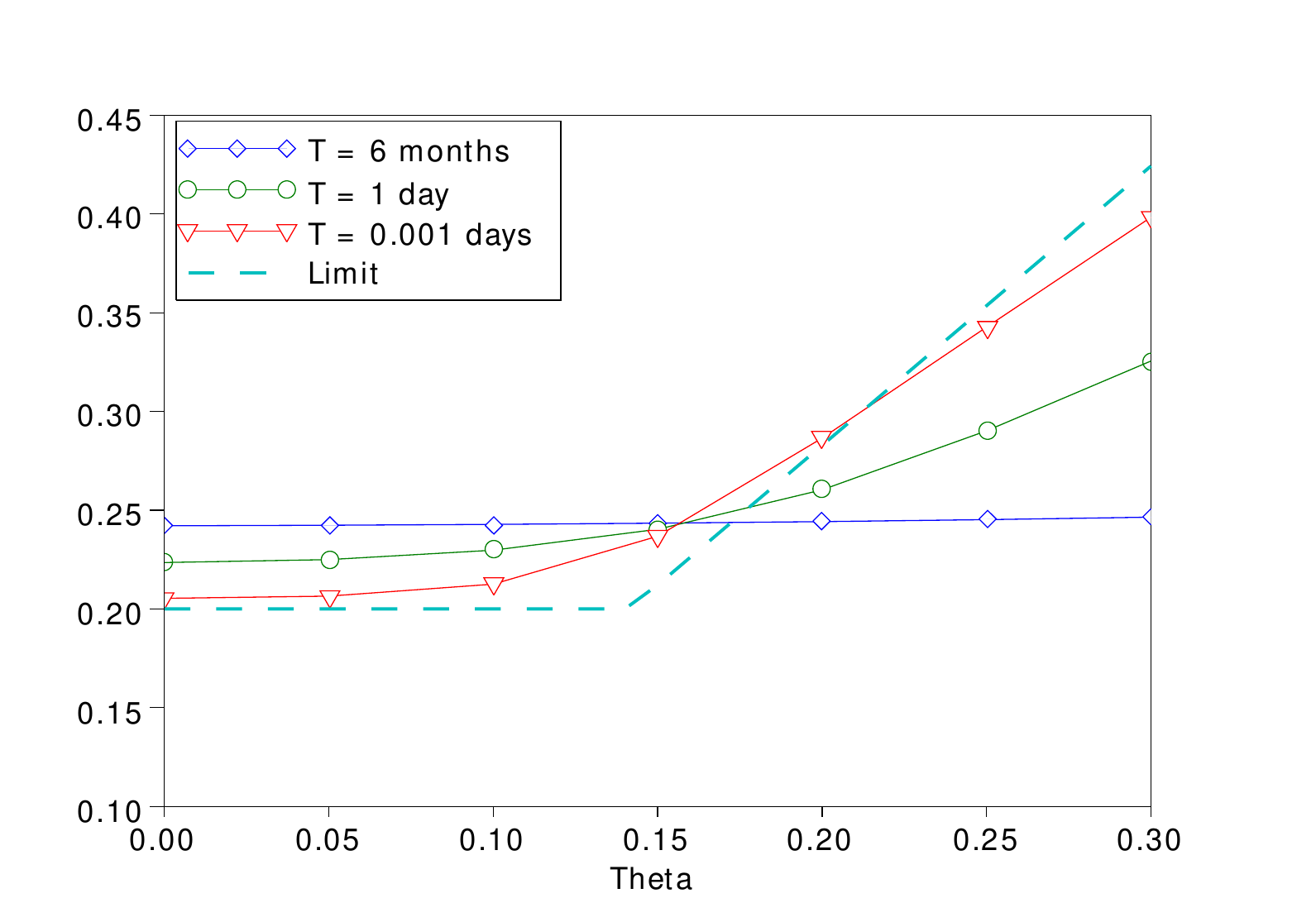}\includegraphics[width = 0.55\textwidth]{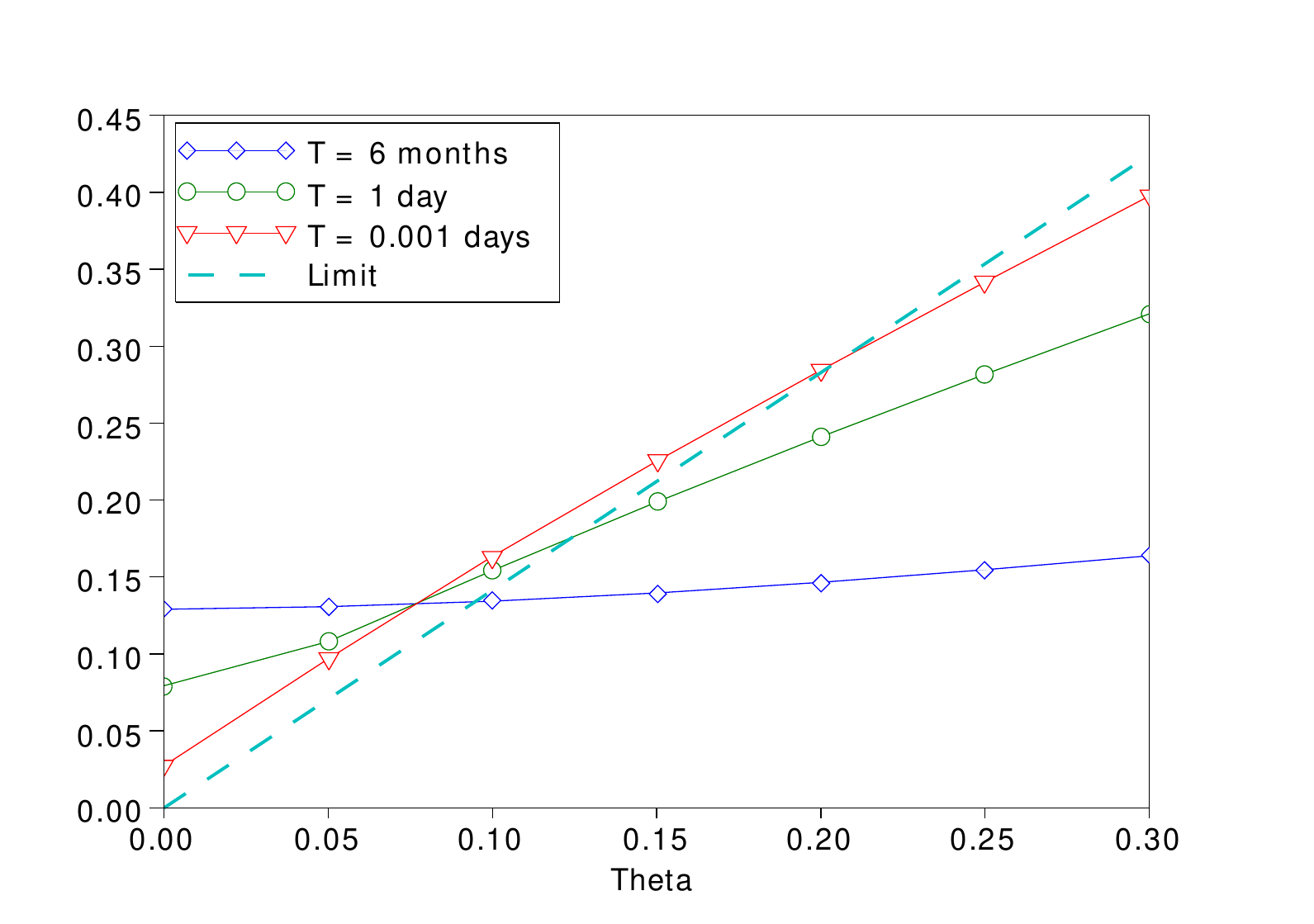}}
\caption{Convergence of the implied volatilities. Left: a diffusion
  component is present. Right: no diffusion component.  }
\label{ivol}
\end{figure}
 
\subsection{Approximation of the implied volatility for small times to
  maturity} In this section we illustrate the approximation of the
implied volatility at small times by the asymptotic formula~\eqref{assform}. We take the same parameters of the tempered stable
process as in Section~\ref{subsec:IVol_Asym} and consider the case
$\sigma=0$ (when the diffusion component is present, in the region
where the pure jump component dominates, the asymptotic formula is the
same, and in the diffusion-dominated region, there are no additional
terms added to the constant limit). Figure~\ref{ivass} illustrates the
quality of the approximation for  $t=1$ day and
  $t=0.1$ days. 
\begin{figure}
\centerline{\includegraphics[width = 0.55\textwidth]{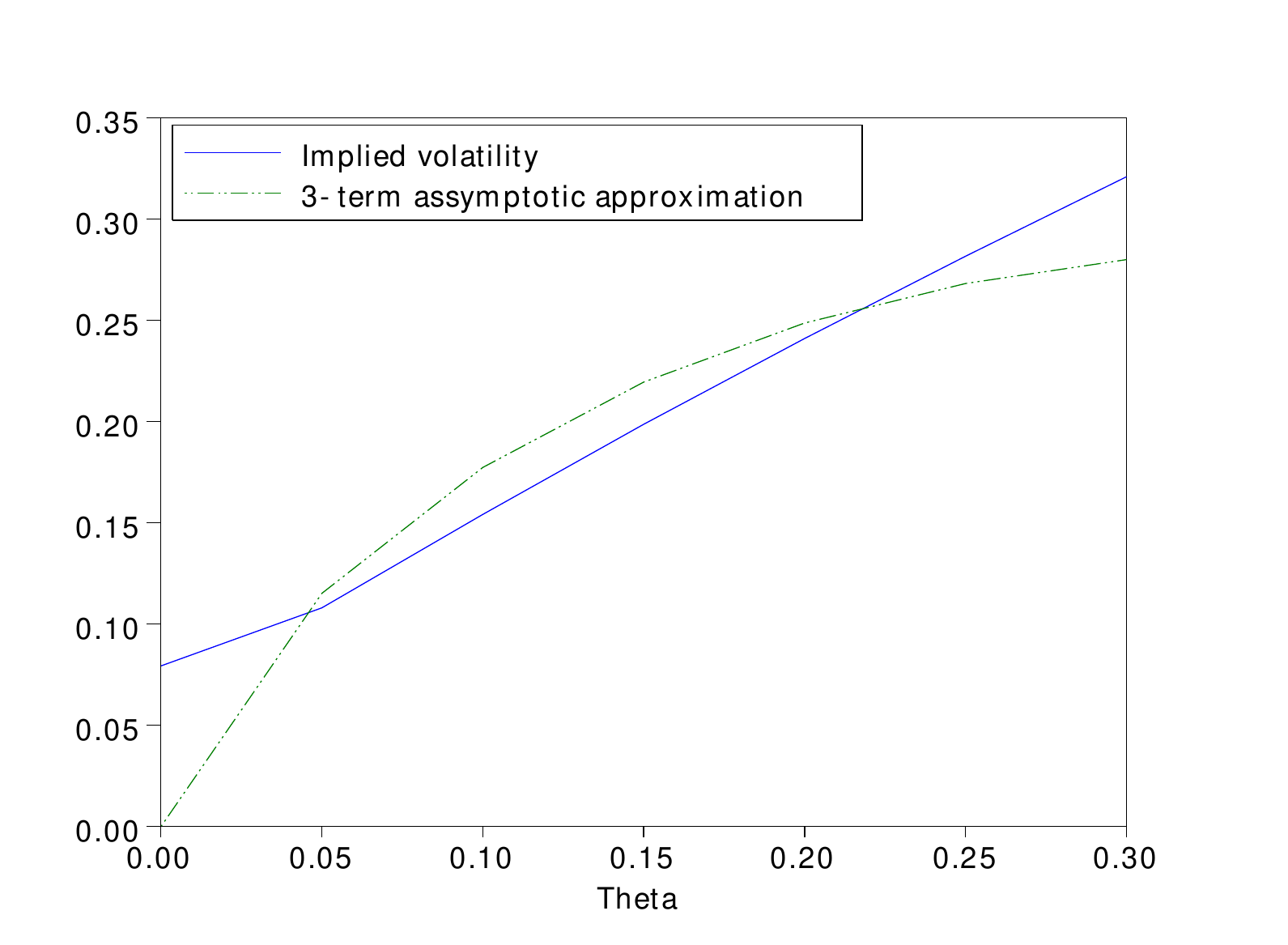}\includegraphics[width = 0.55\textwidth]{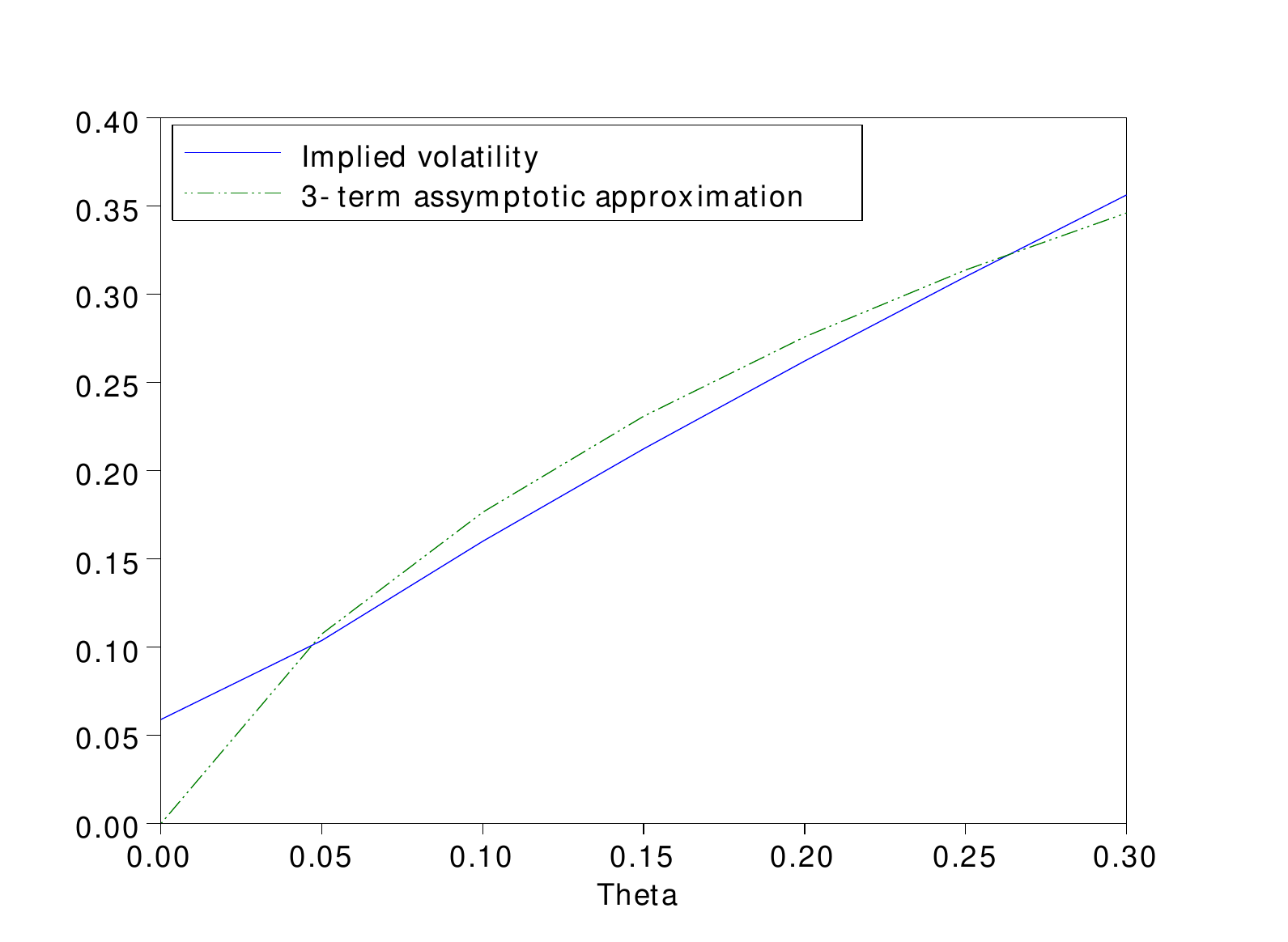}}
\caption{Approximation of the implied volatilities by the asymptotic
  formula \eqref{assform}. Left: $t=1$ day. Right:
  $t=0.1$ days. }
\label{ivass}
\end{figure}

\section{Proofs}
\label{sec:Proofs}

\subsection{Proof of Theorem~\ref{infvar}}
\label{sec:Proof_Thm_infvar}
By Lemma \ref{explin.lm}, to prove Theorem~\ref{infvar}, it is
sufficient to show that 
\begin{align}
\mathbb E[(X_t - k_t)^+] = \mathbb E[(\sigma W_t
- k_t)^+] +  \frac{t k_t^{1-\alpha_+}
  c_+}{\alpha_+-1} + o(tk_t^{1-\alpha_+}+ \mathbb E[(\sigma W_t
- k_t)^+])\label{callcase}
\end{align}
as $t\downarrow 0$ for the call case and 
\begin{align}
\mathbb E[(-k_t - X_t)^+] = \mathbb E[(-k_t - \sigma W_t)^+] +  \frac{t k_t^{1-\alpha_-}
  c_-}{\alpha_--1} + o(tk_t^{1-\alpha_-}+\mathbb E[(-k_t - \sigma W_t)^+]) \label{putcase}
\end{align}
as $t\downarrow 0$ for the put case. Note that~\eqref{putcase} follows from~\eqref{callcase} 
by a substitution $X \mapsto -X$. Therefore, from
now on we concentrate on the proof of~\eqref{callcase}, assuming with
no loss of generality that $c_+>0$. 

\textbf{Step 1.}\quad In this first step, we assume that
$\nu((-\infty,0))=0$ and would like to prove
\begin{align}
\mathbb E[(X_t - k_t)^+] = \mathbb E[(\sigma W_t
- k_t)^+] +  \frac{t k_t^{1-\alpha_+}
  c_+}{\alpha_+-1} + o(tk_t^{1-\alpha_+}).\label{callcase1}
\end{align}
Fix $t>0$, and $\varepsilon>0$ with
$\varepsilon < \frac{1}{32}$, let $X^t$ be a L\'evy process
with no diffusion part, L\'evy measure 
$\nu(dx) 1_{\{0< x\leq\varepsilon k_t\}}$
and third component of the characteristic triplet 
$$
\gamma_t
= \gamma - \int_{(\varepsilon k_t ,1]} z \nu(dz)  
$$
Let $(\xi^t_i)_{i\geq 1}$ be a sequence of i.i.d. random variables
with probability distribution 
$$
\frac{\nu(dz) 1_{\{z> \varepsilon k_t\}}}{\nu(\{z:z> \varepsilon k_t\})},
$$
and 
$N^t$ 
a standard Poisson process with intensity
$\lambda_t:= \nu(\{z:z> \varepsilon k_t\})$. 
Furthermore we assume that 
$X^t$, 
$N^t$
and
$(\xi^t_i)_{i\geq 1}$
are independent.
Then the following equality in law holds
\begin{eqnarray}
\label{eq:Lad_Decomp}
X_t \stackrel{d}{=} \sigma W_t + X^t_t + \sum_{i=1}^{N^t_t} \xi^t_i,
\end{eqnarray}
and it follows that 
\begin{align}
\mathbb E\left[\left(X_t-k_t\right)^+\right] &=
e^{-\lambda_t t} \mathbb E\left[\left(\sigma W_t + X^t_t-k_t\right)^+\right]\label{term1}\\
&+ \lambda_t t  e^{-\lambda_t t} 
\mathbb E\left[\left(\sigma W_t + X^t_t + \xi^t_1-k_t\right)^+\right]\label{term2}\\
& +  e^{-\lambda_t t} \sum_{k\geq 2} \frac{(\lambda_t t)^k}{k!}
\mathbb E\left[\left(\sigma W_t + X^t_t + \sum_{i=1}^k \xi^t_i-k_t\right)^+\right].\label{term3}
\end{align}
As a preliminary computation, we deduce from the assumptions of the theorem that 
the following asymptotic behaviour holds as $t\downarrow 0$
(recall definition~\eqref{def:sim}):
\begin{align}
&\lambda_t \sim c_+(\varepsilon k_t)^{-\alpha_+},\quad 
\Sigma_t:=\int_{(0,k_t \varepsilon]} z^2 \nu(dz) \sim \frac{2c_+}{2-\alpha_+}
(\varepsilon k_t)^{2-\alpha_+},\label{equiv1}\\
& \gamma_t \sim \frac{c_+}{(\varepsilon k_t)^{\alpha_+-1}},\quad \mathbb E[(X_t^t)^2] = t^2 (\gamma_t)^2
+ t \Sigma_t \sim t\Sigma_t,\quad \mathbb E[\xi^t_1]\sim \frac{\alpha_+ }{\alpha_+-1}\varepsilon k_t,\\
& \mathbb E[(X_t^t)^4] = t \int_{(0, k_t \varepsilon]} z^4 \nu(dz) + 4 t^2
\gamma_t \int_{(0,k_t \varepsilon]} z^3 \nu(dz)
+ 3 t^2 \Sigma_t^2
+ 6 t^3 \gamma_t^2 \Sigma_t + t^4 \gamma_t^4 
\nonumber \\
&\qquad \qquad 
\sim\frac{4c_+}{4-\alpha_+}tk_t^{4-\alpha_+}.
 \label{equiv2}
\end{align}
\textbf{To estimate the term in \eqref{term1},} we apply the argument inspired by
Lemma~2 in \cite{ruschendorf02}. 
In the current notation this implies
\begin{align}
\mathbb P[X^t_t > k_t] \leq
\exp\left\{-t\int_{\gamma_t}^{k_t/t} \tau(z)dz\right\},\label{rw}
\end{align}
where $\tau: [\gamma_t,\infty)\to\RR$ is the inverse function
of $s: [0,\infty)\to\RR$ defined by
$$
s(x) = \gamma_t + \int_{(0,k_t \varepsilon]}
z(e^{zx}-1)\nu(dz). 
$$
By Taylor's theorem, this function satisfies
$$
s(x) \leq \gamma_t + x e^{k_t
  \varepsilon x}  \Sigma_t \leq \gamma_t + \frac{e^{2k_t
  \varepsilon x}-1}{k_t \varepsilon}  \Sigma_t.
$$ 
This implies that
$$
\tau(z)\geq \frac{1}{2k_t \varepsilon}\log\left\{1 +
  \frac{z-\gamma_t}{\Sigma_t} k_t \varepsilon\right\},
$$
and therefore, substituting this into \eqref{rw},
\begin{align}
\mathbb P[X^t_t > k_t] &\leq
\exp\left\{-\frac{t\Sigma_t}{2(k_t\varepsilon)^2}\int_{0}^{\frac{k_t
    \varepsilon}{\Sigma_t}(k_t/t - \gamma_t)}
\log(1+s)ds\right\}\notag\\
&\leq \exp\left\{-\frac{k_t - \gamma_t t}{2 k_t\varepsilon}\log\left(\frac{k_t
    \varepsilon}{e \Sigma_t}(k_t/t - \gamma_t)\right)\right\}\label{ruschwor}
\end{align}
From the assumptions of the theorem and \eqref{equiv1}--\eqref{equiv2}, there
exists $t_1>0$ such that $t<t_0$ implies
$$
\mathbb P[X^t_t > k_t] \leq \exp\left\{-\frac{1}{4\varepsilon}\log\left(\frac{k_t^2
    \varepsilon}{2e t \Sigma_t}\right)\right\} = \left(\frac{k_t^2
    \varepsilon}{2e t
    \Sigma_t}\right)^{-\frac{1}{4\varepsilon}} \leq C
(tk_t^{-\alpha_+})^{\frac{1}{4\varepsilon}} \leq C
(tk_t^{-\alpha_+})^{8}
$$ 
for some constant $C<\infty$. By similar arguments it can be shown
that 
\begin{align}
\mathbb P\left[X^t_t > \frac{k_t}{2}\right] \leq C (tk_t^{-\alpha_+})^{4}.\label{estproba}
\end{align}

Coming back to the estimation of \eqref{term1}, we first deal with
the case $\sigma=0$. In this case, the Cauchy-Schwartz inequality
allows to conclude that 
$$
\mathbb E\left[\left(X^t_t-k_t\right)^+\right] \leq
\mathbb E\left[\left({X^t_t}\right)^2\right]^{\frac{1}{2}} \mathbb P[X^{t}_t > k_t]^{\frac{1}{2}} =
O(k_t(tk_t^{-\alpha_+})^2),
$$
because the first factor remains bounded by
\eqref{equiv1}--\eqref{equiv2}. 

Let us now focus on the case $\sigma>0$. Let $f(x) :=
\frac{1}{\sqrt{2\pi}}\int_{x}^\infty (z-x)
e^{-\frac{z^2}{2}}dz$. The expectation in \eqref{term1} can be expressed as
\begin{align*}
\mathbb E[(\sigma W_t + X^t_t - k_t)^+ ]= \sigma \sqrt{t}
\mathbb E\left[f\left(\frac{k_t - X^t_t}{\sigma\sqrt{t}}\right)\right].
\end{align*}
By Taylor's formula, we then get
\begin{align*}
&\mathbb E[(\sigma W_t + X^t_t - k_t)^+ ] = \sigma \sqrt{t}
f\left(\frac{k_t}{\sigma\sqrt{t}}\right) -
f'\left(\frac{k_t}{\sigma\sqrt{t}}\right)\mathbb E[X^t_t] \\ &\qquad \qquad +
\frac{1}{\sigma\sqrt{t}} \mathbb E\left[(X_t^t)^2 \int_0^1 (1-\theta)
  f^{\prime\prime}\left(\frac{k_t - \theta
      X^t_t}{\sigma\sqrt{t}}\right) d\theta\right]\\
& = \mathbb E[(\sigma W_t  - k_t)^+ ] + \gamma_t t \mathbb P[\sigma W_t > k_t] +
\frac{1}{\sigma\sqrt{2\pi t}} \mathbb E\left[(X_t^t)^2 \int_0^1 (1-\theta)  e^{-\frac{1}{2}\left(\frac{k_t - \theta
      X^t_t}{\sigma\sqrt{t}}\right)^2 }d\theta\right].
\end{align*}
We now need to show that the second and the third terms do not
contribute to the limit. Since by assumption $\frac{\sqrt{t}}{k_t} \to
0$, we have that $\mathbb P[\sigma W_t>k_t] \to 0$ as $t \to 0$, and
therefore, by \eqref{equiv1}--\eqref{equiv2}, 
$$
\gamma_t t \mathbb P[\sigma W_t > k_t] = o(tk_t^{1-\alpha_+}). 
$$
The last term can be split into two terms, which are easy to estimate
using \eqref{equiv1}--\eqref{equiv2}:
\begin{align*}
&\frac{1}{\sqrt{t}} \mathbb E\left[(X_t^t)^2 1_{\{X^t_t \leq \frac{k_t}{2}\}}\int_0^1 (1-\theta)  e^{-\frac{1}{2}\left(\frac{k_t - \theta
      X^t_t}{\sigma\sqrt{t}}\right)^2 }d\theta\right] \leq
\frac{1}{\sqrt{t}} \mathbb E[(X_t^t)^2]
e^{-\frac{1}{8}\left(\frac{k_t}{\sigma\sqrt{t}}\right)^2 } \\ &= O(t
k_t^{1-\alpha_+})
\frac{k_t}{\sqrt{t}}e^{-\frac{1}{8}\left(\frac{k_t}{\sigma\sqrt{t}}\right)^2
} = o(t k_t^{1-\alpha_+}),
\end{align*}
because by assumption of the theorem, $\frac{k_t}{\sqrt{t}} \to
\infty$. On the other hand,
\begin{align*}
&\frac{1}{\sqrt{t}} \mathbb E\left[(X_t^t)^2 1_{\{X^t_t > \frac{k_t}{2}\}}\int_0^1 (1-\theta)  e^{-\frac{1}{2}\left(\frac{k_t - \theta
      X^t_t}{\sigma\sqrt{t}}\right)^2 }d\theta\right] \leq
\frac{1}{\sqrt{t}} \mathbb E\left[(X_t^t)^2 1_{\{X^t_t > \frac{k_t}{2}\}}\right] \\
&\leq \frac{1}{\sqrt{t}} \mathbb E[(X_t^t)^4]^{\frac{1}{2}} \mathbb P[X^t_t >
\frac{k_t}{2}]^{\frac{1}{2}} = O(k_t^{2-\frac{\alpha_+}{2}})
O((tk_t^{-\alpha_+})^2) = o(t k_t^{1-\alpha_+})
\end{align*}
by \eqref{equiv2} and \eqref{estproba}. We have therefore shown that 
$$
\mathbb E[(\sigma W_t + X^t_t - k_t)^+ ] = \mathbb E[(\sigma W_t - k_t)^+ ] + o(t k_t^{1-\alpha_+}).
$$
From \eqref{equiv1}, the assumption on $k_t$
in Theorem~\ref{infvar} and the Lipschitz
property of the function $x\mapsto x^+$, it follows that 
$$
e^{-\lambda_t t}\mathbb E[(\sigma W_t + X^t_t - k_t)^+ ] = \mathbb E[(\sigma W_t - k_t)^+ ] + 
o(t k_t^{1-\alpha_+})
$$
as well. 

\textbf{For the term in~\eqref{term2}}, the Lipschitz property of the
function $x\mapsto x^+$, \eqref{equiv1}--\eqref{equiv2}
and the assumption of the theorem (i.e. the first
assumption on 
$k_t$
in Theorem~\ref{infvar}
in the case $\sigma=0$ and the second one
otherwise) 
imply the following estimate:
\begin{align*}
&\lambda_t t\left| \mathbb E\left[\left(\sigma W_t + X^t_t + \xi^t_1-k_t\right)^+\right] -
  \mathbb E\left[\left(\xi^t_1-k_t\right)^+\right]\right|\leq \lambda_t t \{{\mathbb E[|X_t^t|]}  +
\sigma \mathbb E[|W_t|]\}\\& \leq
\lambda_t t\{ \mathbb E[|X_t^t|^2]^{1/2} + \sigma \sqrt{t}\} = O(\lambda_t t^{\frac{3}{2}}
\Sigma_t^{\frac{1}{2}}) + \sigma \lambda_t t^{\frac{3}{2}} 
= o(tk_t^{1-\alpha_+})\quad \text{as $t\to 0$}.
\end{align*}
On the other hand, integration by parts implies
$$
\lambda_t t \mathbb E\left[\left(\xi^t_1-k_t\right)^+\right] = t\int_{k_t}^\infty
\left(z-k_t\right)\nu(dz) = t
\int_{k_t}^\infty U(z)dz\sim
\frac{t k_t^{1-\alpha_+} c_+}{\alpha_+-1}
\quad \text{as $t\to 0$},
$$
where
$U(z):= \nu((z,\infty))$,
which yields the second term in \eqref{infvar.eq}.

\textbf{To treat the summand in \eqref{term3},} observe that by
\eqref{equiv1}--\eqref{equiv2}, for $k\geq 2$,
\begin{align*}
&\mathbb E\left[\left(\sigma W_t + X^t_t + \sum_{i=1}^k
    \xi^t_i-k_t\right)^+\right] \leq  \sigma \sqrt{t} + \mathbb E[|X^t_t|^2]^{1/2}
+ k\mathbb E[\xi^t_1] \\
& = \sigma\sqrt{t} + O(t^{1/2} k_t^{1-\alpha_+/2}) +
k O(k_t) = kO(k_t).
\end{align*}
Therefore, the summand in \eqref{term3} is of order
$O(k_t\lambda^2_t t^2) = O(k_t(tk_t^{-\alpha_+})^2)$
and hence 
$o(tk_t^{1-\alpha_+})$.

\textbf{Step 2.}\quad We now treat the case when
$\nu((-\infty,0))\neq 0$. Let $X^-$ be a spectrally negative L\'evy process with zero
mean and zero diffusion part and $Y$ be a spectrally positive L\'evy
process such that $X^- + Y \stackrel{d}{=} X$. Let
$\beta\in(\max(\alpha_+,\alpha_-),\alpha)$ (where we take $\alpha=2$
is $\sigma>0$) and $\chi_t = t^{\frac{1}{\beta}}$. As before, we fix $\varepsilon>0$ and let $\bar X^t$ be
a L\'evy process with no diffusion part, zero mean and L\'evy measure
$\nu(dx)1_{\{-\varepsilon \chi_t \leq x < 0\}}$, let $\bar \gamma_t =
\int_{(-\infty, - \varepsilon \chi_t)} z\nu(dz)$, let $(\bar
\xi^t_i)_{i\geq 1}$ be a sequence of i.i.d. random variables with
probability distribution 
$$
\frac{\nu(dz)1_{\{z< \varepsilon \chi_t\}}}{\nu(\{z:z<\varepsilon
  \chi_t\})} 
$$
and finally $\bar \lambda_t = \nu(\{z:z<\varepsilon
  \chi_t\})$. With a decomposition similar to
  \eqref{term1}--\eqref{term3}, it is easy to show that the option price $\mathbb E[(X_t-k_t)^+]$ admits an upper bound
\begin{align*}
\mathbb E[(X_t-k_t)^+] &= \mathbb E[(X^-_t + Y_t -k_t)^+] \leq \mathbb E[(\bar X^t_t + \bar
\gamma_t t+ Y_t -k_t)^+] \\
&\leq \mathbb E[( Y_t + \chi_t - k_t)^+] \mathbb P[\bar X^t_t \leq \chi_t] +
\mathbb E[X_t^2]^{\frac{1}{2}} \mathbb P[\bar X^t_t > \chi_t]^{\frac{1}{2}}
\end{align*}
and a lower bound
\begin{align*}
\mathbb E[(X_t-k_t)^+] &= \mathbb E[(X^-_t + Y_t -k_t)^+] \geq  e^{-\bar \lambda_t t} \mathbb E[(\bar X^t_t + \bar
\gamma_t t+ Y_t -k_t)^+]\\
&\geq e^{-\bar \lambda_t t} \mathbb P[\bar X^t_t \geq -\chi_t] \mathbb E[(-\chi_t + \bar
\gamma_t t+ Y_t -k_t)^+]
\end{align*}
Similarly to \eqref{equiv1}--\eqref{equiv2}, we have
$$
\bar \Sigma_t := \int_{(-\varepsilon \chi_t,0)} z^2 \nu(dz) \sim
\frac{2}{2-\alpha_-} (\varepsilon \chi_t)^{2-\alpha_-}, 
$$
and with the same logic as in \eqref{ruschwor}, we have that
$$
\mathbb P[\bar X^t_t > \chi_t] \leq \left(\frac{\chi_t^2 \varepsilon}{e
    \bar\Sigma_t t }\right)^{\frac{1}{2\varepsilon}} \sim
\left(t^{\frac{\alpha_-}{\beta} -1}\right)
^{\frac{1}{2\varepsilon}},\quad t\to 0.
$$
It is now clear that one can choose $\varepsilon>0$ so that the square
root of this
expression becomes equal to $o(t k_t^{1-\alpha_+})$. Since $\mathbb P[\bar
X^t_t < -\chi_t]$ admits the same estimate, and $t\bar \lambda_t \to
0$ as $t\to 0$, we get that
\begin{align*}
\mathbb E[(X_t-k_t)^+] &\geq m_t \mathbb E[(Y_t-\chi_t + \bar
\gamma_t t -k_t)^+]\\
\mathbb E[(X_t-k_t)^+] &\leq M_t \mathbb E[(Y_t+\chi_t  -k_t)^+] + o(t k_t^{1-\alpha_+}),
\end{align*}
where $m_t$ and $M_t$ converge to $1$ as $t\to 0$. 
Since $\chi_t = o(k_t)$ and $\bar \gamma_t t = o(k_t)$, from \eqref{callcase1}, we then get
\begin{align*}
\mathbb E[(X_t-k_t)^+] &\geq m_t \mathbb E[(\sigma W_t-\chi_t + \bar
\gamma_t t -k_t)^+] + m_t\frac{t(k_t + \chi_t - \bar \gamma_t
  t)^{1-\alpha_+} c_+}{\alpha_+-1}+ o(t k_t^{1-\alpha_+})\\
& = m_t \mathbb E[(\sigma W_t-\chi_t + \bar
\gamma_t t -k_t)^+] + \frac{t k_t^{1-\alpha_+} c_+}{\alpha_+-1}+ o(t k_t^{1-\alpha_+})\\
\mathbb E[(X_t-k_t)^+] &\leq M_t \mathbb E[(\sigma W_t+\chi_t  -k_t)^+] + M_t\frac{t(k_t
  - \chi_t )^{1-\alpha_+} c_+}{\alpha_+-1}+ o(t k_t^{1-\alpha_+})\\
 & = M_t \mathbb E[(\sigma W_t+\chi_t  -k_t)^+] + \frac{t k_t^{1-\alpha_+} c_+}{\alpha_+-1}+ o(t k_t^{1-\alpha_+})
\end{align*}
Finally, since we also have $\chi_t = o(\sqrt{t})$ and $\bar \gamma_t
t = o(\sqrt{t})$, we get that $\mathbb E[(\sigma W_t-\chi_t + \bar
\gamma_t t -k_t)^+] \sim \mathbb E[(\sigma W_t -k_t)^+]$ and $\mathbb E[(\sigma
W_t+\chi_t -k_t)^+] \sim \mathbb E[(\sigma W_t -k_t)^+]$, which allows to
complete the proof 
of Theorem~\ref{infvar}.

\subsection{Proof of Proposition~\ref{finvar}}
\label{sec:Proof_Prop_finvar}

We first concentrate on the proof of~\eqref{finvar2.eq}. Let
$(\sigma^2,\nu,b)$ be the characteristic triplet of $X$ with
respect to zero truncation function, meaning that 
$$
X_t = bt + \sigma W_t + \sum_{s\leq t} \Delta X_s, 
$$
where as usual
for any
$s>0$
we define
$\Delta X_s=X_s-X_{s-}$.

Assume first that $\sigma=0$. The left-derivative of the function 
$$
x\mapsto(e^{-k_t}-e^x)^+\qquad\text{is}\qquad
x\mapsto -e^{x}1_{\{x\leq-k_t\}},
$$
and hence It\^o-Tanaka formula~\cite[Ch.~IV,~Thm.~70]{protter2nd} applied to 
the process
$(e^{-k_t}-e^X)^+$
yields
\begin{eqnarray*}
(e^{-k_t}-e^{X_t})^+ & = & -\int_{(0,t]}e^{X_{s-}}1_{\{X_{s-}\leq-k_t\}}dX_s\\
& + &
\sum_{0<s\leq t} \left[(e^{-k_t} -e^{X_s} )^+ - (e^{-k_t} -e^{X_{s-}} )^++ e^{X_{s-}}1_{\{X_{s-}\leq-k_t\}}\Delta X_s\right] \\ 
& = & -b \int_0^t e^{X_{s-}}1_{\{X_{s-}\leq-k_t\}}ds + 
\sum_{0<s\leq t}\left[ (e^{-k_t} -e^{X_{s-}+\Delta X_s} )^+ - (e^{-k_t} -e^{X_{s-}} )^+\right]
\end{eqnarray*}
for any
$t\geq0$,
since, in this case,
$X$
has paths of finite variation.
Since 
$(\Delta X_s)_{s\geq0}$
is a Poisson point process with intensity measure 
$\nu(dy)\times ds$,
and 
$X_{s-}\neq X_s$
for at most countably many time
$s$
in the interval
$[0,t]$
almost surely,
taking expectations on both sides of the path-wise representation above
and applying the compensation formula for point processes
yields
\begin{eqnarray}
\label{ito1}
\mathbb E[(e^{-k_t} - e^{X_t})^+] 
&= &-b \mathbb E\left[\int_{0}^t e^{X_{s}} 1_{\{X_{s} \leq -k_t\}} 
  ds\right] \\ 
& + & \mathbb E\left[\int_0^t \int_{\mathbb R\setminus\{0\}} \left\{(e^{-k_t} -e^{X_s
      + y})^+ - (e^{-k_t} -e^{X_{s}} )^+\right\} \nu(dy)ds\right].\nonumber
\end{eqnarray}
From Theorem 43.20 in~\cite{sato}, we have that $\frac{X_t}{t} \to b$ almost surely
as $t\to 0$. Therefore, for any 
$\varepsilon>0$,
each path 
$X(\omega)$
satisfies the following inequalities 
$$
X_t(\omega)>t(b-\varepsilon)>-k_t\qquad\text{for all small enough}\qquad t>0
$$
(recall that by assumption
$k_t/t\to\infty$
as
$t\downarrow0$).
Furthermore, 
since
$k_t\downarrow0$
as
$t\downarrow0$,
for all sufficiently small $t$ 
we have
$X_s(\omega) > -k_t$ for all $s\leq t$. 
Therefore it holds 
$\frac{1}{t}\int_0^t e^{X_s} 1_{\{X_s \leq -k_t\}} ds \to 0$ 
almost surely. Since
on the other hand we have
$$
\frac{1}{t}\int_0^t e^{X_s} 1_{\{X_s \leq -k_t\}} ds \leq
\frac{1}{t}\int_0^t e^{-k_t} ds = e^{-k_t},
$$ 
the dominated convergence theorem
implies
$$
\mathbb E\left[\int_{0}^t e^{X_{s}} 1_{\{X_{s} \leq -k_t\}} 
  ds\right] = o(t)\quad \text{as $t\to 0$}. 
$$

To deal with the second term in \eqref{ito1}, observe that for any
$\varepsilon>0$, 
each path 
$X(\omega)$
satisfies the inequalities 
$(b-\varepsilon )t
\leq X_t(\omega) \leq (b+\varepsilon)t$
for all $t$ sufficiently small. 
Therefore
$X(\omega)$
also satisfies the following
inequalities for any
$y\in\RR\setminus\{0\}$
and all sufficiently small times $t>0$:
$$
(e^{-k_t} -e^{X_s(\omega) + y})^+ - (e^{-k_t} -e^{X_{s}(\omega)} )^+ 
\leq (e^{-k_t} -e^{(b-\varepsilon)t + y})^+ - (e^{-k_t} -e^{(b-\varepsilon)t} )^+
$$
and 
$$
(e^{-k_t} -e^{X_s(\omega) + y})^+ - (e^{-k_t} -e^{X_{s}(\omega)} )^+ 
\geq (e^{-k_t} -e^{(b+\varepsilon)t + y})^+ - (e^{-k_t} -e^{(b+\varepsilon)t} )^+.
$$
The second term in both sides of the above inequalities is in fact
always zero for sufficiently small $t$. 
Therefore we get the following almost sure convergence:
$$
\frac{1}{t}\int_0^t \int_{\mathbb R\setminus\{0\}} \left\{(e^{-k_t} -e^{X_s + y})^+ -
    (e^{-k_t} -e^{X_{s}} )^+\right\} \nu(dy)ds \to \int_{\mathbb R\setminus\{0\}}(1-e^{y})^+\nu(dy)
    \qquad\text{as $t\downarrow0$.}
$$
Since the function $y\mapsto (e^{-k_t} - e^{X_s + y})^+$ is
Lipschitz with a Lipschitz constant that does not depend
on the path
$X(\omega)$,
the dominated convergence theorem 
and the representation in~\eqref{ito1}
yield
$$
\lim_{t\downarrow 0} \frac{1}{t} \mathbb E[(e^{-k_t} - e^{X_t})^+] = \int_{\mathbb R\setminus\{0\}}(1-e^{y})^+\nu(dy).
$$

Assume now that $\sigma>0$. Define 
$$
f(t,x) := \mathbb E\left[\left(1-e^{x + k_t + \sigma W_t - \frac{\sigma^2}{2} t}\right)^+\right]\qquad\text{and}\qquad
Z_t:= \left(b +\frac{\sigma^2}{2}\right)t + \sum_{0<s\leq t} \Delta X_s
$$
and note that
\begin{equation}
\label{eq:rep_Z_exp}
\mathbb E[(e^{-k_t} - e^{X_t})^+] = e^{-k_t} \mathbb E[f(t,Z_t)]. 
\end{equation}
The derivative of $f$ with respect to $x$ is given by
\begin{equation}
\label{eq:Lower_Bound}
f'(t,x) := -\mathbb E\left[e^{x + k_t + \sigma W_t - \frac{\sigma^2}{2}
    t}1_{\{x + k_t + \sigma W_t - \frac{\sigma^2}{2} t\leq 0\}}\right]\geq -1,
\end{equation}
It can be
computed explicitly as
\begin{equation}
\label{eq:Explicit_Der}
f'(t,x) = - e^{x+k_t} N\left(-\frac{\frac{1}{2} \sigma^2 t + x+k_t}{\sigma\sqrt{t}}\right),
\end{equation}
where $N(\cdot)$ denotes the standard normal CDF. Note also for future use
that
\begin{align}
f^{\prime\prime}(t,x) = - e^{x+k_t} N\left(-\frac{\frac{1}{2} \sigma^2
    t + x+k_t}{\sigma\sqrt{t}}\right) + \frac{1}{\sigma\sqrt{t}} n\left(\frac{-\frac{1}{2} \sigma^2
    t + x+k_t}{\sigma\sqrt{t}}\right)\geq -1,\label{2der}
\end{align}
with $n(x) = N'(x)$.

Applying It\^o's formula to the process
$f(t,Z)$
as a function of $Z$ 
with
$t$
fixed, yields
\begin{eqnarray*}
f(t,Z_t) & = & f(t,0)+\int_{(0,t]}f'(t,Z_{s-})dZ_s+\sum_{0<s\leq t}\left[f(t,Z_s)-f(t,Z_{s-})-f'(t,Z_{s-})\Delta Z_s\right]\\
& = & f(t,0)+\left(b +\frac{\sigma^2}{2}\right) \int_{0}^tf'(t,Z_{s-})ds+\sum_{0<s\leq t}\left[f(t,Z_{s-}+\Delta X_s)-f(t,Z_{s-})\right],
\end{eqnarray*}
since
$\Delta Z_s=\Delta X_s$
for all 
$s>0$.
By taking the expectation and applying~\eqref{eq:rep_Z_exp} we find
\begin{align}
\label{ito2}
\mathbb E[(e^{-k_t} - e^{X_t})^+] &= 
e^{-k_t} \mathbb E[f(t,0)] +
e^{-k_t}\mathbb E\left[ \int_0^t f'(t,Z_{s}) 
ds \right] \\
&\qquad \qquad + e^{-k_t}\mathbb
E\left[\int_0^t \int_{\mathbb R\setminus\{0\}} \left\{f(t,Z_s + y)-f(t,Z_s)\right\}\nu(dy) ds\right].
\notag
\end{align}

The first term on the
right-hand side of~\eqref{ito2}
is equal to the first term on the right-hand side of~\eqref{finvar2.eq}. 
As in the case $\sigma=0$, using the almost sure convergence 
$\frac{Z_t}{t} \to b + \frac{\sigma^2}{2}$, 
the explicit form~\eqref{eq:Explicit_Der} of $f'(t,x)$ 
and the assumption that
$\frac{k_t}{\sqrt{t}}\to\infty$ as $t \downarrow 0$,
we get that 
$$
\frac{1}{t} \int_0^t f'(t,Z_s) ds \to 0
$$
almost surely. 
Since
$|f'(t,Z_s)|\leq 1$
for all
$t,s\geq0$
by~\eqref{eq:Lower_Bound},
the dominated convergence theorem yields
$$
\mathbb E\left[ \int_0^t f'(t,Z_{s})ds \right] = o(t). 
$$

To treat the last term in~\eqref{ito2}, we use the fact that for any $\varepsilon>0$,
each path
$Z(\omega)$
satisfies the inequalities
$$
t(b + \sigma^2/2-\varepsilon)\leq Z_t(\omega) \leq t(b + \sigma^2/2+\varepsilon).
$$
for all sufficiently small $t$.
Therefore, 
since $f^{''}(t,x) \geq -1$, 
the following inequalities hold
\begin{equation}
\label{eq:path_bounds}
f'(t, t(b-\varepsilon + \sigma^2/2) + \theta y) - 2t \varepsilon \leq f'(t,Z_s(\omega) + \theta y) \leq 
f'(t, t(b+\varepsilon + \sigma^2/2) + \theta y) + 2t \varepsilon
\end{equation}
for any trajectory
$s\mapsto Z_s(\omega)$,
where
$s\in[0,t]$,
and all sufficiently small $t$.
The random variable under the expectation 
in the last term on the right-hand side of~\eqref{ito2}
can be expressed as follows:
\begin{align}
\frac{1}{t}\int_0^t \int_{\mathbb R\setminus\{0\}} \left\{f(t,Z_s + y)-f(t,Z_s)\right\}\nu(dy)  
ds = \frac{1}{t}\int_0^t ds \int_0^1 d\theta \int_{\mathbb R\setminus\{0\}} y f'(t,Z_s + \theta y)\nu(dy). \label{underexp}
\end{align}
The path-wise bounds in~\eqref{eq:path_bounds}
can be used to estimate~\eqref{underexp} from above and
below. For each path
$Z(\omega)$
we have the following bound for 
$y\in(-\infty,0)$
and
all sufficiently small
$t$:
\begin{align*}
&2t\varepsilon \int_{(-\infty,0)} y \nu(dy) + \int_0^1 d\theta
\int_{(-\infty,0)} y f'(t, t(b+\varepsilon + \sigma^2/2 ) + \theta y) \nu(dy)\\
&\leq \frac{1}{t}\int_0^t ds \int_0^1 d\theta \int_{(-\infty,0)} y f'(t,Z_s(\omega) +
\theta y)\nu(dy)\\ 
&\leq - 2t\varepsilon \int_{(-\infty,0)} y \nu(dy) + \int_0^1 d\theta
\int_{(-\infty,0)} y f'(t, t(b-\varepsilon + \sigma^2/2 ) + \theta y) \nu(dy).
\end{align*}
The explicit form~\eqref{eq:Explicit_Der} of $f'(t,x)$ 
implies 
that for all $y<0$ and $\theta>0$
we have
$$
f'(t, t(b\pm\varepsilon + \sigma^2/2 )
+ \theta y) \to -e^{\theta y}
\qquad\text{as $t\to 0$.}
$$ 
Since $f'(t,x)$ is bounded, the dominated convergence theorem yields
$$
\int_0^1 d\theta
\int_{(-\infty,0)} y f'(t, t(b\pm\varepsilon + \sigma^2/2)
+ \theta y) \nu(dy) \to -\int_0^1 d\theta \int_{(-\infty,0)} y e^{\theta y}
\nu(dy) = \int_{(-\infty,0)} \nu(dy)(1-e^y) 
$$
as
$t\downarrow0$.
Formula~\eqref{eq:Explicit_Der} for $f'(t,x)$ 
implies 
that for all $y\in(0,\infty)$ and $\theta>0$
we have
$$
f'(t, t(b\pm\varepsilon + \sigma^2/2 )
+ \theta y) \to 0 
\qquad\text{as $t\to 0$.}
$$ 
An analogous argument 
for 
$y\in(0,\infty)$
to the one above
and the representation in~\eqref{underexp}
imply the almost sure convergence
$$
\frac{1}{t}\int_0^t \int_{\mathbb R\setminus\{0\}} \left\{f(t,Z_s + y)-f(t,Z_s)\right\}\nu(dy)  
ds \to \int_{\mathbb R\setminus\{0\}}(1-e^y)^+ \nu(dy)
\qquad\text{as $t\to0$.}
$$
Finally, since $f(t,x)$ is Lipschitz in 
$x$,
with the Lipschitz constant independent 
of
$t$,
the dominated convergence theorem implies
$$
\frac{1}{t}\mathbb
E\left[\int_0^t \int_{\mathbb R\setminus\{0\}} \left\{f(t,Z_s +
    y)-f(t,Z_s)\right\}\nu(dy) ds\right] \to \int_{\mathbb R\setminus\{0\}}(1-e^y)^+ \nu(dy).
$$
This concludes the proof of~\eqref{finvar2.eq}. 
Note that in this proof, we did not use the condition 
in~\eqref{eq:Assum_tail},
but only the assumption 
$\int_{\mathbb R\setminus\{0\}} |x|\nu(dx) <\infty$. 

We now concentrate on the proof of~\eqref{finvar1.eq}. 
Since the L\'evy process 
$X$ 
satisfies~\eqref{eq:Assum_tail},
we can define the share measure
$\wt \PP$,
via
$ \frac{d\wt{\mathbb P}}{d\mathbb P}|_{\mathcal F_t} =
e^{X_t}$,
as in the proof of Theorem~\ref{ivexpand}.
Analogous to the equality in~\eqref{eq:put_call_sym},
we have
\begin{equation}
\label{eq:share_measure_identity}
\mathbb E[(e^{X_t}- e^{k_t})^+] = e^{k_t} 
\wt{\mathbb E}[(e^{-k_t} - e^{-X_t})^+], 
\end{equation}
where $\wt{ \mathbb E}$ denotes the expectation under the
share measure
$\wt{\mathbb P}$. 
Furthermore,
it is well-known that under the measure
$\wt \PP$, the process 
$X$ is again a L\'evy process with a characteristic triplet
$(\sigma^2,\wt \nu, \wt \gamma)$, where $\wt \nu(dx) = e^x \nu(dx)$, 
and 
$e^{-X}$
is a positive 
$\wt\PP$-martingale
started at one.
The L\'evy measure $\wt \nu$
clearly satisfies 
$$
\int_{\mathbb R\setminus\{0\}} |x| \wt \nu(dx)< \infty.
$$
Therefore we can apply~\eqref{finvar2.eq} to the process $-X$
under the measure $\wt{\mathbb P}$. 
Hence the identity in~\eqref{eq:share_measure_identity}
yields:
\begin{align*}
\mathbb E[(e^{X_t}- e^{k_t})^+] & = e^{k_t} 
\mathbb E[(e^{-k_t} - e^{\sigma W_t- \frac{\sigma^2 t}{2}})^+] +
t e^{k_t}    \int_{(0,\infty)}
(1-e^{-x}) \wt\nu(dx) +o(t)\\
& = 
\mathbb E[(e^{\sigma W_t- \frac{\sigma^2 t}{2}}-e^{k_t} )^+] + t  \int_{(0,\infty)} (e^x-1)\nu(dx) +o(t),
\end{align*}
where we used the Black-Scholes put-call symmetry
given in~\eqref{eq:Put_Call_Sy_BS},
the fact $e^{k_t} = 1 + o(1)$ 
and the equality
$\wt \nu(dx) = e^x \nu(dx)$. 
This establishes the formula in~\eqref{finvar1.eq} and concludes the proof of 
Proposition~\ref{finvar}.

\section*{Appendix}

\begin{lemma}\label{explin.lm}
Let $X$ be a L\'evy process 
satisfying~\eqref{eq:Assum_tail}
and
$k_t$ a deterministic function such that 
$$
k_t>0\quad\forall t>0 \quad \text{and}\quad \lim_{t\downarrow} k_t = 0\quad \text{as $t \downarrow 0$}.
$$
Then for any $b\in \mathbb R$
we have
\begin{align*}
\mathbb E[(e^{X_t + bt} - e^{k_t})^+] &= e^{k_t}\mathbb E[(X_t-k_t)^+] + O(t)\quad \text{as
  $t\downarrow 0$,}\\
\mathbb E[(e^{-k_t}-e^{X_t+ bt} )^+] &= e^{-k_t}\mathbb E[(-k_t-X_t)^+] + O(t)\quad \text{as
  $t\downarrow 0$.}
\end{align*}
\end{lemma}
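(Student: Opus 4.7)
The plan is to reduce both identities to an algebraic linearisation of the exponential payoff and then control the remainder using a moment of order $t$ that is uniform in $k_t$. The starting point is the pointwise identity
\[
(e^\alpha-e^\beta)^+ = e^\beta(\alpha-\beta)^+ + e^\beta R(\alpha-\beta)\,1_{\{\alpha\geq\beta\}},\qquad R(u):=e^u-1-u\geq 0,
\]
which follows by splitting into the cases $\alpha\geq\beta$ and $\alpha<\beta$ and pulling out the factor $e^\beta$.

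For the call identity I would apply this with $\alpha=X_t+bt$ and $\beta=k_t$, so that the error reduces to $e^{k_t}\mathbb E[R(X_t+bt-k_t)\,1_{\{X_t+bt\geq k_t\}}]$. On this event the argument $u:=X_t+bt-k_t$ is non-negative, so the elementary inequality $R(u)\leq \tfrac{1}{2}u^2 e^u$ together with $u^2\leq (X_t+bt)^2$ (valid because $k_t>0$) upper-bounds the remainder by $\tfrac{1}{2}\mathbb E[(X_t+bt)^2 e^{X_t+bt}]$. By \eqref{eq:Assum_tail} the moment generating function $\mathbb E[e^{\lambda X_t}]=e^{t\kappa(\lambda)}$ is smooth in a neighbourhood of $\lambda=1$; differentiating twice at $\lambda=1$ yields $\mathbb E[X_t^2 e^{X_t}]=t\kappa''(1)+O(t^2)=O(t)$, and expanding the square gives $\mathbb E[(X_t+bt)^2 e^{X_t+bt}]=O(t)$. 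A final Lipschitz swap based on $|(X_t+bt-k_t)^+-(X_t-k_t)^+|\leq |b|t$ converts the pre-factor $\mathbb E[(X_t+bt-k_t)^+]$ into $\mathbb E[(X_t-k_t)^+]$ modulo $O(t)$.

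For the put identity I would first factor $(e^{-k_t}-e^{X_t+bt})^+=e^{-k_t}(1-e^v)^+$ with $v:=X_t+bt+k_t$, apply the analogous identity
\[
e^{-k_t}(-v)^+ - (e^{-k_t}-e^{X_t+bt})^+ = e^{-k_t}R(v)\,1_{\{v\leq 0\}},
\]
and bound $R(v)\leq v^2/2$ for $v\leq 0$ via the integral representation $R(v)=\int_v^0 (s-v)e^s ds\leq \int_v^0 (s-v)ds$. Since $k_t>0$ forces $|v|\leq |X_t+bt|$ on $\{v\leq 0\}$, the remainder is controlled by $\tfrac{1}{2}\mathbb E[(X_t+bt)^2]=O(t)$, and the final Lipschitz swap of $(-k_t-X_t-bt)^+$ for $(-k_t-X_t)^+$ closes the argument.

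The main obstacle is obtaining the $O(t)$ estimate without any quantitative hypothesis on the decay of $k_t$: naively bounding $R(u)\leq u^2/2$ on all of $\mathbb R$ would produce an $O(k_t^2)$ contribution, and in the regimes where the lemma is applied (e.g.\ $k_t\sim \sqrt{t\log(1/t)}$ in Corollary~\ref{volimp.cor}) this dwarfs $O(t)$. The essential trick is to retain the indicators $1_{\{X_t+bt\geq k_t\}}$ and $1_{\{v\leq 0\}}$ so that the positivity of $k_t$ enforces $|u|,|v|\leq |X_t+bt|$, after which the error is absorbed into a second moment of the L\'evy process that is intrinsically $O(t)$, independent of the choice of $k_t$.
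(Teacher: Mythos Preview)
Your argument is correct and is essentially the paper's proof in a slightly different packaging. The paper defines $f(x,k)=(e^x-e^k)^+-e^k(x-k)^+$, which is precisely your $e^\beta R(\alpha-\beta)1_{\{\alpha\geq\beta\}}$, and bounds it via the integral Taylor remainder $f(x,k)=(x-k)^2\int_0^1(1-\theta)e^{(1-\theta)k+\theta x}1_{\{(1-\theta)k+\theta x\geq k\}}\,d\theta$; on $\{x\geq k\}$ this gives exactly your estimate $\tfrac12(x-k)^2 e^x\leq\tfrac12 x^2 e^x$, and both proofs finish with $\mathbb E[X_t^2 e^{X_t}]=\partial_u^2\,\mathbb E[e^{uX_t}]\big|_{u=1}=O(t)$ plus a Lipschitz swap to remove $bt$. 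Your put case is spelled out more carefully than the paper's ``similar manner'', and your closing remark about retaining the indicator so that $|u|\leq|X_t+bt|$ (making the bound independent of $k_t$) is exactly the mechanism hidden in the paper's indicator $1_{\{k_t+\theta(X_t-k_t)\geq k_t\}}$.
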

\begin{proof} 
Since 
$0\leq(X_t+bt-k_t)^+-(X_t-k_t)^+\leq b^+ t=O(t)$,
it is clearly sufficient to prove the formula for the call in the case
$b=0$. 
Let $f(x,k) = (e^x-e^k)^+ - e^k (x-k)^+$ and note the following: 
$f'_x(x,k) = (e^x-e^k)^+$ 
for all
$x\in\RR$
and 
$f^{''}_x(x,k) = e^x 1_{\{x\geq k\}}$
for all
$x\in\RR\setminus\{k\}$. 
By Taylor's formula we have 
$f(x,k)=(x-k)^2\int_0^1(1-\theta)f_x''((1-\theta)k+\theta x)d\theta$
for any
$x\neq k$,
and, considering $k_t$ fixed,
we find
\begin{equation*}
\mathbb E[f(X_t,k_t)] 
= \mathbb E\left[(X_t-k_t)^2 \int_0^1 (1-\theta) e^{k_t+\theta (X_t-k_t)} 1_{\{k_t+\theta (X_t-k_t) \geq k_t\}} d\theta \right] \leq C_0 
\mathbb E[X_t^2 e^{X_t}]
\end{equation*}
for some constant
$C_0>0$.
Under the assumption of the lemma, the right-hand side can be computed
as
$$
\mathbb E[X_t^2 e^{X_t}]  = \frac{\partial^2 }{\partial u^2}
\mathbb E[e^{uX_t}]\Big|_{u=1}. 
$$
A direct computation using the L\'evy-Khintchine formula then shows that
$\mathbb E[X_t^2 e^{X_t}] = O(t)$ as $t\downarrow 0$. The put case is treated
in a similar manner. 
\end{proof}


\end{document}